\documentclass[11pt]{article}

\usepackage{amsmath,amssymb,amsfonts,amsthm,mathtools}
\usepackage{geometry}
\usepackage{microtype}
\usepackage{enumitem}
\usepackage[colorlinks=true,linkcolor=blue,citecolor=blue,urlcolor=blue]{hyperref}

\theoremstyle{plain}
\newtheorem{theorem}{Theorem}[section]
\newtheorem{lemma}[theorem]{Lemma}
\newtheorem{proposition}[theorem]{Proposition}
\newtheorem{corollary}[theorem]{Corollary}
\newtheorem{conjecture}[theorem]{Conjecture}

\theoremstyle{definition}
\newtheorem{definition}[theorem]{Definition}
\newtheorem{remark}[theorem]{Remark}
\newtheorem{example}[theorem]{Example}

\newcommand{\N}{\mathbb{N}}
\newcommand{\Z}{\mathbb{Z}}
\newcommand{\PP}{\mathbb{P}} % primes (as a set)
\newcommand{\cP}{\mathcal{P}} % prime sets in intervals
\newcommand{\cC}{\mathcal{C}} % constellation events
\newcommand{\cS}{\mathcal{S}} % sifted sets
\newcommand{\cJ}{\mathcal{J}} % Selberg quadratic form

\newcommand{\1}{\mathbf{1}} % indicator

\DeclareMathOperator{\supp}{supp}

\newcommand{\E}{\mathbb{E}} % expectation

\newcommand{\logtwo}{\log_2}

\newcommand{\vp}{\varphi}
\newcommand{\Lam}{\Lambda}

\newcommand{\K}{K}
\newcommand{\code}{\mathrm{code}}

\newcommand{\WeakLoc}{\mathrm{Weak}_{\mathrm{loc}}}
\newcommand{\WeakSel}{\mathrm{Weak}_{\mathrm{Sel}}}

\title{Prime Successor Irreducibility:\\
Turing Machine Complexity, Kolmogorov Complexity,\\
and Weakness-Based Formulations}
\author{Ben Goertzel \and Bill Lauritzen}
\date{\today}

\begin{document}
\maketitle

\begin{abstract}
We develop a family of conjectures and theorems expressing the idea that the sequence of prime numbers exhibits a form of computational irreducibility in the transition from one prime to its successor. Informally, given a prime $p$, there is no general algorithm that can compute the least prime greater than $p$ substantially faster than sequentially testing candidate integers for primality, except possibly on sparse sets of inputs.

Our framework proceeds along several complementary lines. First, we formalize Prime Successor Irreducibility in a uniform Turing-machine complexity model (PSI-T), asserting lower bounds on the running time of any algorithm computing prime successors relative to a natural sequential baseline. Second, we propose a Kolmogorov-complexity formulation (PSI-K), which asserts that typical prime gaps are algorithmically incompressible given the scale at which they occur; we prove PSI-K$(c,\delta)$ unconditionally for all fixed $c<1$ using standard sieve upper bounds. Third, we develop weakness-based formulations: PSI-W (sparse-set anti-concentration) shows that no small ``menu'' of candidate gap values captures a noticeable fraction of primes, while PSI-W-LE (logical entropy) shows that collision probabilities decay and logical entropy tends to $1$. These results extend to prime constellations and consecutive gap vectors. Finally, we present a sieve-theoretic framework that connects local obstruction patterns to Selberg weakness parameters, providing clean compositional bounds.

The PSI-K and weakness formulations connect irreducibility directly to classical statistical questions about prime gaps. Using the relationship between Kolmogorov complexity and Shannon entropy, we derive rigorous lower bounds on the entropy of prime gap distributions in dyadic intervals $[X,2X]$. Together, these formulations provide a unified complexity-theoretic perspective on the apparent local unpredictability of the prime sequence, without asserting randomness or independence.
\end{abstract}

\tableofcontents

% ============================================================
\section{Introduction}
% ============================================================

Let $p_n$ denote the $n$-th prime and $p_{n+1}$ its successor. A basic computational task in number theory and cryptography is: given a prime $p$, compute the least prime $p' > p$. The most direct method simply tests each integer $p+1, p+2, \ldots$ for primality until a prime is found.

The informal idea underlying Prime Successor Irreducibility (PSI) is that this na\"ive approach is not merely a baseline but is essentially optimal, up to polylogarithmic factors, for large primes. That is, there is no uniform algorithm that can systematically ``jump'' to the next prime while avoiding computational work comparable to examining almost all integers in the intervening gap.

A version of this idea was articulated by Lauritzen as a standalone conjecture emphasizing the absence of low-information shortcuts from one prime to its successor \cite{Lauritzen}. The present paper embeds that intuition into a more formal complexity-theoretic framework and develops its consequences.

Importantly, PSI does not assert that primes are random, independent, or patternless. Rather, it asserts a specific negative claim: that no uniform, bounded-description algorithm can exploit hidden structure to compute prime successors asymptotically faster than sequential search on almost all inputs.

\subsection{Overview of formulations}

We present several complementary formalizations of this idea:

\begin{enumerate}[leftmargin=2em]
\item \textbf{PSI-T (Turing-machine formulation).} We express irreducibility as lower bounds on the running time of any uniform prime-successor algorithm in the standard bit-complexity model. We state four variants that differ in how often the conjectured lower bound is required to hold: infinitely many primes, positive density, density one, or all but finitely many.

\item \textbf{PSI-K (Kolmogorov-complexity formulation).} We express irreducibility as incompressibility of typical prime gaps conditional on their scale, and derive entropy bounds for gap distributions. A key contribution of this paper is an \emph{unconditional proof} that PSI-K$(c,\delta)$ holds for every fixed $c < 1$ and every $\delta \in (0,1)$, using only standard Selberg/Brun-type sieve upper bounds.

\item \textbf{PSI-W (sparse-set anti-concentration).} We prove that no small ``menu'' of candidate gap values---specifically, any set of size at most $(\log X)^c$ for $c < 1$---can capture a noticeable fraction of primes in $[X,2X]$. This is the set-valued relaxation of PSI-K.

\item \textbf{PSI-W-LE (logical entropy).} We show that the collision probability of the gap distribution tends to $0$, hence logical entropy tends to $1$. This provides a clean measure of diversity that tensorizes under independent sampling.

\item \textbf{Sieve-theoretic weakness framework.} We develop a notion of ``local weakness'' for prime-counting problems defined by congruence obstructions, showing how local weakness parameters canonically induce Selberg majorants and yield clean compositional bounds.
\end{enumerate}

\subsection{Key results}

The main contributions of this paper include:

\begin{itemize}[leftmargin=2em]
\item An unconditional proof of PSI-K$(c,\delta)$ for all $c < 1$ (Theorem~\ref{thm:PSIK-unconditional}).
\item Menu bounds showing $\Pr[G_X \in S] \ll |S| \log\log X / \log X$ for any finite set $S$ (Theorem~\ref{thm:menu-single}).
\item Collision probability bounds $w_X \ll \log\log X / \log X$ implying logical entropy $\ell_X \to 1$ (Theorem~\ref{thm:collision-single}).
\item Extensions to prime constellations and consecutive gap vectors (Section~\ref{sec:constellations}).
\item A sieve-theoretic framework connecting local obstruction patterns to prime-counting upper bounds (Section~\ref{sec:sieve-framework}).
\end{itemize}

\subsection{Relationship between formulations}

PSI-T and PSI-K do not formally imply one another, but they rule out the same underlying phenomenon: the existence of a low-complexity rule or bounded-information functional that maps a prime to its successor. PSI-T expresses this in terms of unavoidable computational work; PSI-K expresses it in terms of irreducible information content. The weakness formulations PSI-W and PSI-W-LE provide provable statistical consequences that follow from standard analytic number theory.

Together these formulations provide complementary views of the same irreducibility intuition, with the weakness results serving as unconditionally provable proxies for the stronger conjectural statements.

% ============================================================
\section{Preliminaries}
\label{sec:preliminaries}
% ============================================================

\subsection{Primes and prime gaps}

Let $\{p_n\}_{n \geq 1}$ denote the increasing sequence of all primes. Thus $p_1 = 2$, $p_2 = 3$, $p_3 = 5$, and so on. We define the \emph{prime gap} after $p_n$ by
\[
g_n := p_{n+1} - p_n.
\]

For a prime $p$ in a dyadic interval $[X, 2X]$, we write $g(p) = p' - p$, where $p'$ is the least prime greater than $p$. We let $p^+$ denote the next prime after $p$, and more generally $p^{+(j)}$ denotes the $j$-th successor prime.

We write $\ell(p) = \lfloor \log_2 p \rfloor + 1$ for the bit-length of $p$ in binary.

\subsection{Dyadic intervals and the gap distribution}

Fix $X \geq 2$ and let
\[
\cP_X := \PP \cap [X, 2X], \qquad \pi_X := |\cP_X| = \pi(2X) - \pi(X).
\]
We sample a prime $P$ uniformly from $\cP_X$ and define the gap random variable
\[
G_X := g(P).
\]
Let $\mu_X$ denote the probability mass function of $G_X$:
\[
\mu_X(h) := \Pr[G_X = h], \qquad h \in \N.
\]

By Bertrand's postulate, for $p \in [X, 2X]$ we have $p^+ \leq 2p \leq 4X$, hence
\begin{equation}\label{eq:gap-range}
G_X \leq 2X \quad \text{always.}
\end{equation}

\subsection{Computational model and uniformity}

We work in the standard deterministic Turing-machine model of bit complexity. All algorithms are required to be uniform Turing machines with a fixed finite description. In particular:
\begin{itemize}[leftmargin=2em]
\item No oracle access is permitted.
\item No non-uniform advice depending on the input size is permitted.
\item The algorithm's description must be independent of the input prime.
\end{itemize}
This excludes models in which large precomputed tables, size-indexed advice strings, or external black-box information trivialize the successor problem.

\subsection{Kolmogorov complexity}

Fix a universal prefix-free Turing machine $U$. For a finite binary string $x$, the (plain) Kolmogorov complexity of $x$ is
\[
\K(x) := \min\{|p| : U(p) = x\},
\]
where $|p|$ denotes the length of program $p$ in bits. Conditional complexity $\K(x \mid y)$ is defined analogously.

We fix an integer-encoding map $\code(n) \in \{0,1\}^*$ such that $|\code(n)| = \log_2 n + O(1)$.

\textbf{Logarithms.} We use $\logtwo$ for base-2 logarithms (bits), and $\log$ for the natural logarithm. Since $\logtwo X = (\log X)/(\log 2)$, any appearance of $(\logtwo X)^c$ may be replaced by a constant multiple of $(\log X)^c$.

\subsection{Shannon entropy and its relation to Kolmogorov complexity}

Let $P$ be a probability distribution on a countable set $\mathcal{X} \subseteq \{0,1\}^*$. The Shannon entropy of $P$ is
\[
H(P) := -\sum_{x \in \mathcal{X}} P(x) \log_2 P(x).
\]

For a recursive probability distribution $P$, we have
\[
0 \leq \E_{x \sim P}[\K(x)] - H(P) \leq \K(P),
\]
where $\K(P)$ is the Kolmogorov complexity of the distribution $P$ itself.

\subsection{Quantale weakness}
\label{subsec:quantale-weakness}

Several formulations in this paper measure ``weakness'' of hypotheses or distributions---a concept borrowed from cognitive science and AGI theory \cite{bennett-thesis, goertzel-weakness}. This subsection provides the general mathematical framework; subsequent sections instantiate it for specific number-theoretic applications.

\subsubsection{Commutative quantales}

A \emph{commutative quantale} $(V, \otimes, \oplus, e, \leq)$ is a complete lattice $(V, \leq)$ equipped with a monoidal product $\otimes$ (associative, commutative, with unit $e$), where the join $\oplus$ is the lattice supremum. The product $\otimes$ distributes over arbitrary joins. Intuitively, $\otimes$ combines ``intensities'' while $\oplus$ aggregates or joins multiple cases.

\begin{example}[Key quantale examples]\label{ex:quantales}
\begin{enumerate}[leftmargin=2em]
\item \textbf{Probabilistic quantale:} $([0,1], \times, +, 1, \leq)$, where $\otimes = \times$ (multiplication) and $\oplus = +$ (truncated addition or max). This underlies collision probability and logical entropy.
\item \textbf{Counting/MDL quantale:} $(\N \cup \{\infty\}, +, \max, 0, \leq)$, where $\otimes = +$ (addition) and $\oplus = \max$. This connects to description length and Kolmogorov complexity.
\item \textbf{Set-theoretic quantale:} $(\mathcal{P}(W), \cap, \cup, W, \subseteq)$ for a universe of possible worlds $W$. Here $\otimes = \cap$ (intersection) and $\oplus = \cup$ (union). This is the setting for Bennett's original notion of weakness.
\end{enumerate}
\end{example}

\subsubsection{Quantale weakness on relations}

Let $\mathcal{C}$ be enriched over a commutative quantale $(V, \otimes, \oplus, e, \leq)$ with a universe of objects $U$ and a valuation $\mu : U \to V$. For a relation (or hypothesis) $H \subseteq U \times U$ representing pairs that the hypothesis \emph{fails to distinguish}, define the \textbf{quantale weakness}:
\begin{equation}\label{eq:quantale-weakness}
w(H) := \bigoplus_{(u,v) \in H} \bigl(\mu(u) \otimes \mu(v)\bigr).
\end{equation}

\textbf{Interpretation.} The weakness $w(H)$ measures how many (and how important) pairwise distinctions the hypothesis $H$ fails to make. Pairs $(u,v)$ with large $\mu(u) \otimes \mu(v)$ contribute more; the join $\oplus$ aggregates all such undistinguished pairs. A hypothesis with high weakness ``commits to less'' and is thus simpler or more general in the Occam's Razor sense.

This framework unifies several classical notions:
\begin{itemize}[leftmargin=2em]
\item \textbf{Bennett/MDL weakness:} In the counting quantale, if $\mu(u) = 1$ for each element (token), then $w(H)$ counts the number of undistinguished pairs, which relates to description length via $\text{MDL}(H) \approx \log_2 N - \log_2 w(H)$ for $N$ total possibilities.
\item \textbf{Probabilistic weakness:} In the probabilistic quantale with $\mu(u) = p(u)$ a probability distribution, the weakness of the same-block relation $H_\pi$ of a partition $\pi$ becomes
\[
w(H_\pi) = \sum_{(u,v) \in H_\pi} p(u) p(v),
\]
which is exactly the \emph{collision probability}. Its complement $1 - w(H_\pi)$ is Ellerman's \emph{logical entropy}.
\item \textbf{Set-theoretic weakness:} In the powerset quantale, $w(H) = \bigcup_{(u,v) \in H} (\mu(u) \cap \mu(v))$ measures the set of possible worlds left undistinguished.
\end{itemize}

\subsubsection{Connection to this paper}

The weakness notions introduced in Sections~\ref{sec:PSI-W}, \ref{sec:PSI-W-LE}, and \ref{sec:sieve-framework} are all instances of quantale weakness:
\begin{itemize}[leftmargin=2em]
\item \textbf{PSI-W (menu bounds)} measures how much probability mass can be captured by a small set of gap values---essentially bounding $w(S)$ for small $|S|$ in the probabilistic quantale.
\item \textbf{PSI-W-LE (logical entropy weakness)} directly computes the collision probability $w_X = \sum_h \mu_X(h)^2$, which is quantale weakness in the probabilistic quantale where $H$ is the diagonal relation (same-gap pairs).
\item \textbf{Local and Selberg weakness} (Section~\ref{sec:sieve-framework}) use a multiplicative/Euler-product structure that can be viewed as quantale weakness in a suitable product quantale, where the valuation encodes local density reductions from sieving.
\end{itemize}

This unified perspective clarifies why these apparently different ``weakness'' measures share common algebraic properties such as compositionality and monotonicity.

% ============================================================
\section{PSI-T: Turing Machine Formulation}
\label{sec:PSI-T}
% ============================================================

\subsection{The sequential prime successor algorithm}

We define the sequential successor algorithm as our baseline:

\begin{description}
\item[Input:] A prime $p$ written in binary.
\item[Algorithm:] Let $m \leftarrow p+1$. Repeat: test whether $m$ is prime using a fixed polynomial-time primality test with cost $C(\cdot)$. If prime, halt and output $m$. Otherwise increment $m$ and continue.
\item[Output:] The least prime $m > p$.
\end{description}

We call this algorithm $\mathrm{Seq}$. For each prime $p_n$, define $T_{\mathrm{Seq}}(p_n)$ to be the number of Turing steps performed on input $p_n$. Since on input $p_n$ the algorithm tests exactly the $g_n$ integers $p_n+1, \ldots, p_{n+1}$, each with bit-length $\ell(n) + O(1)$, we have
\[
T_{\mathrm{Seq}}(p_n) \asymp g_n \cdot C(\ell(p_n)),
\]
where $C(k)$ is the worst-case cost of primality testing on $k$-bit inputs.

\begin{remark}[On the choice of primality test]
We fix $C(k)$ to be the cost of some deterministic polynomial-time primality test, such as AKS or Miller-Rabin with deterministic witness selection. Concretely, we require $C(k) \leq k^d$ for some constant $d \geq 1$ (see Appendix~\ref{app:primality} for explicit estimates). 

The PSI-T conjectures concern whether one can avoid sequential search through the gap, \emph{not} whether one can improve the primality test itself. An algorithm that ``beats'' $\mathrm{Seq}$ by using a faster primality subroutine is not a counterexample to PSI-T; rather, it simply changes the baseline. The content of PSI-T is that, for any fixed polynomial-time primality test, one cannot systematically skip over most candidates in the gap---the computational work scales with the gap size $g_n$, not merely with the bit-length $\ell(p_n)$.

Put differently: PSI-T asserts that the number of primality tests performed is $\Omega(g_n / (\log p_n)^\alpha)$ for some $\alpha > 0$, regardless of which polynomial-time test is used. The irreducibility claim is about the search process, not the per-candidate test.
\end{remark}

\subsection{Prime successor algorithms}

\begin{definition}[Prime successor algorithm]
A Turing machine $M$ is called a \emph{prime successor algorithm} if, on input the binary representation of a prime $p$, it halts and outputs the binary representation of the least prime strictly greater than $p$.
\end{definition}

\subsection{PSI-T conjectures}

We state four increasingly strong versions of PSI-T, differing only in how frequently the lower bound must hold.

\begin{conjecture}[PSI-T, infinitely many primes]\label{conj:PSIT-inf}
Let $M$ be any prime successor algorithm. Then there exists a constant $\alpha > 0$ and infinitely many indices $n$ such that
\[
\mathrm{time}_M(p_n) \geq \frac{1}{(\log p_n)^\alpha} T_{\mathrm{Seq}}(p_n).
\]
\end{conjecture}

\begin{conjecture}[PSI-T, positive density]\label{conj:PSIT-posdens}
Let $M$ be any prime successor algorithm. Then there exist constants $\alpha > 0$, $c > 0$, and a subset $S$ of the primes with natural density $\delta_{\PP}(S) = c > 0$, such that for all $p_n \in S$,
\[
\mathrm{time}_M(p_n) \geq \frac{1}{(\log p_n)^\alpha} T_{\mathrm{Seq}}(p_n).
\]
\end{conjecture}

\begin{conjecture}[PSI-T, density one]\label{conj:PSIT-dens1}
Let $M$ be any prime successor algorithm. Then there exists a constant $\alpha > 0$ and a subset $S$ of the primes with natural density $\delta_{\PP}(S) = 1$ such that for all $p_n \in S$,
\[
\mathrm{time}_M(p_n) \geq \frac{1}{(\log p_n)^\alpha} T_{\mathrm{Seq}}(p_n).
\]
\end{conjecture}

\begin{conjecture}[PSI-T, almost everywhere]\label{conj:PSIT-ae}
Let $M$ be any prime successor algorithm. Then there exists a constant $\alpha > 0$ and an integer $N \geq 1$ such that for all $n \geq N$,
\[
\mathrm{time}_M(p_n) \geq \frac{1}{(\log p_n)^\alpha} T_{\mathrm{Seq}}(p_n).
\]
\end{conjecture}

Each conjecture implies the previous ones in the list.

\begin{remark}[Interpretation]
PSI-T does not forbid occasional shortcuts or unusually small gaps. It asserts only that no uniform algorithm can asymptotically outperform the sequential baseline by more than polylogarithmic factors on almost all inputs.
\end{remark}

% ============================================================
\section{PSI-K: Kolmogorov Complexity Formulation}
\label{sec:PSI-K}
% ============================================================

\subsection{Motivation}

An alternative expression of irreducibility is informational rather than operational. Instead of lower-bounding running time, we ask whether the prime gap itself contains irreducible information once trivial scale information is supplied.

Near scale $X$, typical prime gaps are of order $\log X$, and their binary encodings have length $\Theta(\logtwo \log X)$. PSI-K conditions on the scale $X$ to factor out this trivial information.

\subsection{The PSI-K conjecture}

\begin{definition}[PSI-K$(c,\delta)$]
Let $c > 0$ and $0 < \delta < 1$. We say that \emph{PSI-K$(c,\delta)$ holds} if there exists $X_0$ such that for all $X \geq X_0$, at least a proportion $1 - \delta$ of primes $p \in [X, 2X]$ satisfy
\[
\K(\code(g(p)) \mid X) \geq c \logtwo \logtwo X.
\]
\end{definition}

\begin{conjecture}[PSI-K]\label{conj:PSI-K}
There exist constants $c > 0$ and $\delta < 1$ such that PSI-K$(c,\delta)$ holds.
\end{conjecture}

\subsection{Entropy consequences}

Using the relationship between Kolmogorov complexity and Shannon entropy, PSI-K implies lower bounds on the entropy of the prime gap distribution in dyadic intervals. Specifically, under PSI-K with parameters $(c, \delta)$, for all sufficiently large $X$:
\[
H(\mu_X) \geq (1 - \delta) c \logtwo \logtwo X - O(1).
\]

Moreover, the pointwise inequality $\K(g \mid X) \leq -\logtwo \mu_X(g) + O(1)$ combined with PSI-K implies that each ``typical'' gap value (one that is incompressible given $X$) can account for at most $O((\log X)^{-c})$ of the total probability mass.

% ============================================================
\section{Unconditional Proof of PSI-K for $c < 1$}
\label{sec:PSIK-proof}
% ============================================================

We now prove that PSI-K$(c,\delta)$ holds unconditionally for every fixed $c \in (0,1)$ and every $\delta \in (0,1)$. The proof combines Kolmogorov counting with standard sieve upper bounds.

\subsection{Kolmogorov counting}

For $X \geq 3$ and $c > 0$, define the set of low-complexity gap values:
\[
S_c(X) := \{g \in \N : \K(\code(g) \mid X) < c \logtwo \logtwo X\}.
\]

\begin{lemma}[Kolmogorov counting bound]\label{lem:kolmogorov-count}
For every $c > 0$ and all $X \geq 3$,
\[
|S_c(X)| \leq 2^{c \logtwo \logtwo X} = (\logtwo X)^c \ll (\log X)^c.
\]
\end{lemma}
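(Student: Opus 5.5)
The plan is the standard pigeonhole count of short programs, done with care for the fact that the threshold $c\logtwo\logtwo X$ need not be an integer. Set $L := c\logtwo\logtwo X$. If $g \in S_c(X)$, then $\K(\code(g)\mid X) < L$, so there is a program $q$ with $|q| < L$ and $U(q, X) = \code(g)$. Here $U$ receives $X$ as conditional input, and we fix that convention for the conditional complexity. Choose one such shortest program $q_g$ for each $g \in S_c(X)$.

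The first step is to show that the map $g \mapsto q_g$ is injective. This holds because $U$ is a (partial) function: two programs that agree, run with the same auxiliary input $X$, produce the same output. So $q_g = q_{g'}$ forces $\code(g) = \code(g')$. Since $\code$ is an injective encoding of integers (implicit in its being an encoding), this gives $g = g'$.

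The second step counts the possible programs. The $q_g$ are binary strings of length at most $\lceil L\rceil - 1 < L$, and there are
\[
\sum_{k=0}^{\lceil L\rceil -1} 2^k = 2^{\lceil L\rceil} - 1
\]
such strings. I want a bound of $2^L$, not $2^{\lceil L\rceil}$, and this is where prefix-freeness is needed. Because $U$ is prefix-free, the domain $\{q : U(q,X)\downarrow\}$ is a prefix-free set for each fixed $X$. By Kraft's inequality,
\[
\sum_{g\in S_c(X)} 2^{-|q_g|} \le 1.
\]
Each term satisfies $2^{-|q_g|} > 2^{-L}$, so $|S_c(X)|\cdot 2^{-L} < 1$, which gives $|S_c(X)| < 2^L$. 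This is sharper than the crude count and avoids the ceiling issue entirely. If one works instead with a plain (non-prefix) machine, the crude count still gives $|S_c(X)| \le 2^{\lceil L\rceil} - 1 < 2\cdot 2^L$. The constant $2$ is absorbed by the $\ll$ in the final form, though strictly not by the first inequality as stated. This justifies relying on the prefix-free hypothesis for $U$ that the paper fixes in the preliminaries.

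Finally, the identities are routine: $2^{c\logtwo\logtwo X} = (\logtwo X)^c$, and $\logtwo X = \log X/\log 2$ gives $(\logtwo X)^c = (\log 2)^{-c}(\log X)^c \ll (\log X)^c$. The condition $X\ge 3$ only ensures $\logtwo\logtwo X > 0$, so that $L$ is positive and the set of admissible program lengths is meaningful.

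The only delicate point is the one flagged above: the paper writes "(plain) Kolmogorov complexity" but uses a prefix-free $U$. I would invoke Kraft's inequality for the prefix-free version and remark that the plain version changes the bound by a factor of at most $2$.
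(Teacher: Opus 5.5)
Your proof is correct. Its skeleton matches the paper's: each $g\in S_c(X)$ has a witness program of length $<L:=c\logtwo\logtwo X$, and $g\mapsto q_g$ is injective because $U(\cdot,X)$ is a function and $\code$ is injective.

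The difference is in the counting step. The paper counts all binary strings of length $<L$ and asserts there are fewer than $2^L$ of them. As you observe, that is only true when $L$ is an integer, since the count is $2^{\lceil L\rceil}-1$; for $L=2.5$ this is $7>2^{2.5}\approx 5.66$. Your use of Kraft's inequality on the prefix-free domain of $U(\cdot,X)$ gives $|S_c(X)|<2^L$ exactly. So your route justifies the constant-free first inequality as stated, at the price of depending on the prefix-free convention for conditional complexity.

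The paper's raw count is more elementary and works equally for plain complexity, but it only yields $|S_c(X)|<2\cdot 2^L$. That factor of $2$ is harmless downstream, because the only use of the lemma (Proposition~\ref{prop:exceptional-bound}) needs nothing beyond $|S_c(X)|\ll(\log X)^c$.
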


\begin{proof}
For each $g \in S_c(X)$ there exists a binary program string $\tau$ of length $|\tau| < c \logtwo \logtwo X$ such that $U(\tau, X) = \code(g)$. For fixed $X$, each program produces at most one output, hence the number of such outputs is at most the number of binary strings of length $< c \logtwo \logtwo X$, which is $< 2^{c \logtwo \logtwo X} = (\logtwo X)^c$.
\end{proof}

\subsection{Sieve upper bound for prime pairs}

For an integer $h \geq 1$ and parameter $X \geq 2$, define
\[
A_h(X) := \#\{p \in [X, 2X] \cap \PP : p + h \in \PP\}.
\]

\begin{theorem}[Selberg/Brun upper bound for prime pairs]\label{thm:pair-sieve}
There exist absolute constants $C_{\mathrm{pair}} \geq 1$ and $X_1 \geq 2$ such that for all $X \geq X_1$ and all even integers $h$ with $1 \leq h \leq 2X$,
\[
A_h(X) \leq C_{\mathrm{pair}} F(h) \frac{X}{(\log X)^2},
\]
where
\[
F(h) := \prod_{\substack{q \mid h \\ q > 2}} \frac{q-1}{q-2}.
\]
Moreover, $F(h) \ll \log \log h$ for $h \geq 3$.
\end{theorem}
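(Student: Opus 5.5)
We prove the bound with the Selberg upper-bound sieve, in its standard ``sifting a polynomial'' form, applied to the sequence $\mathcal{A} = \{ n(n+h) : X \le n \le 2X \}$. A prime $p \in [X,2X]$ with $p+h$ prime and $p > z$ is a surviving element: $n(n+h)$ has no prime factor $q \le z$. So
\[
A_h(X) \le S(\mathcal{A}, z) + z .
\]
For squarefree $d$ with all prime factors at most $z$, the count $|\mathcal{A}_d| = \#\{n \in [X,2X] : d \mid n(n+h)\}$ equals $\frac{\rho(d)}{d}X + O(\rho(d))$. Here $\rho$ is multiplicative, with $\rho(q) = 2$ for $q \nmid h$ and $\rho(q)=1$ for $q \mid h$. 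This includes $q=2$, where $\rho(2)=1$ because $h$ is even. The density $g(d)=\rho(d)/d$ is a sieve of dimension $\kappa = 2$, with $|r_d| \le \rho(d) \le \tau(d)$.

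The Selberg sieve with level $D = z^2$ gives
\[
S(\mathcal{A},z) \le \frac{X}{J(z)} + \sum_{d \le D,\ d \mid P(z)} \tau_3(d)\, |r_d| .
\]
Taking $z = X^{1/4}$, i.e.\ $D = X^{1/2}$, the remainder sum is at most $\sum_{d\le X^{1/2}} \tau(d)^{O(1)} \ll X^{1/2}(\log X)^{O(1)}$. This is negligible against $X/(\log X)^2$.

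The main term needs a lower bound on $J(z)$. The standard lower bound (Halberstam--Richert, Theorem 4.2 style) reads
\[
J(z) \gg \prod_{q \le z}\Bigl(1-\frac{\rho(q)}{q}\Bigr)^{-1}.
\]
Alternatively, one can use the simpler bound $J(z) \ge \sum_{d \le z,\, d\mid P(z)} \prod_{q\mid d} \frac{g(q)}{1-g(q)}$ together with a Rankin/Mertens-type estimate. Then compute
\[
\prod_{q\le z}\Bigl(1-\frac{\rho(q)}{q}\Bigr) = \frac12 \prod_{2<q\le z}\Bigl(1-\frac1q\Bigr)^2 \prod_{\substack{2<q\le z\\ q\mid h}} \frac{1-1/q}{1-2/q} \cdot \prod_{2<q\le z}\frac{1-2/q}{(1-1/q)^2}.
\]
The last product converges to the twin-prime constant. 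The middle product is at most $F(h)$, and Mertens gives $\prod_{q\le z}(1-1/q)^2 \asymp (\log z)^{-2} \asymp (\log X)^{-2}$. Hence $S(\mathcal{A},z) \le C\,F(h)\,X/(\log X)^2$, with $C$ absolute and uniform in $h \le 2X$. Uniformity holds because $\rho(q) \le 2$ always, so the sieve dimension is bounded independently of $h$. Adding the trivial $z = X^{1/4}$ term and the remainder completes the bound for $X \ge X_1$.

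For $F(h) \ll \log\log h$, use $\frac{q-1}{q-2} \le \bigl(1-\frac1q\bigr)^{-2}$ for $q \ge 3$. This gives $F(h) \le (h/\varphi(h))^2$. The squaring is wasteful, so refine it. Write $\frac{q-1}{q-2} = \bigl(1-\frac1q\bigr)^{-1}\bigl(1+\frac{1}{q(q-2)}\bigr)$. The second factor has a convergent product over all $q$, so $F(h) \ll h/\varphi(h) \ll \log\log h$ by the classical bound.

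The main obstacle is the lower bound on $J(z)$ uniformly in $h$. If the off-the-shelf dimension-$2$ Selberg estimate is not to be quoted, I would prove it directly. For example, use $J(z) \ge \sum_{d \le z} \mu^2(d) \prod_{q \mid d} \frac{\rho(q)}{q-\rho(q)}$ and compare with $\prod_{q\le z}(1-\rho(q)/q)^{-1}$ via Rankin's trick, at the cost of constants only. Everything else is Mertens-type bookkeeping.
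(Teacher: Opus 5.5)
Your argument is correct and is the standard Selberg-sieve derivation that the paper takes for granted. The paper states Theorem~\ref{thm:pair-sieve} without proof as a classical sieve bound, and proves the clause $F(h)\ll\log\log h$ by the same factorization you use, $F(h)\le C_*\,h/\varphi(h)$ with $C_*=\prod_{q\ge 3}\bigl(1+\frac{1}{q(q-2)}\bigr)$ (Lemma~\ref{lem:F-bound}). Your local densities are right ($\rho(2)=1$ since $h$ is even, $\rho(q)=1$ for odd $q\mid h$), as is your factorization of $\prod_{q\le z}(1-\rho(q)/q)$. Your reason for uniformity in $h$ is also the right one: since $\rho\le 2$, the one-sided dimension-$2$ condition holds with absolute constants.

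One caution concerns your fallback proof of $J(z)\gg\prod_{q\le z}(1-\rho(q)/q)^{-1}$. Rankin's trick applied directly to the tail $d\ge z$ of $\sum_{d\mid P(z)}$ cannot give this, because in dimension $2$ the tail carries most of the mass. For instance, for $h$ a power of $2$ the part $d<z$ is asymptotically only $1/(2e^{2\gamma})\approx 0.16$ of the full Euler product. Correspondingly, the Rankin ratio $z^{-\epsilon}\prod_{q<z}\bigl(1+\frac{\rho(q)}{q}(q^{\epsilon}-1)\bigr)$ is at least $1$ for every $\epsilon>0$ once $z$ is large.

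The fix is to keep only $d\mid P(z^{\eta})$ for a small fixed $\eta$. With $\epsilon=1/(\eta\log z)$, Rankin's trick and Mertens (using only $\rho\le 2$) bound the ratio by $\exp(-1/\eta+2(e-1)+o(1))\le 1/2$. So the part $d<z$ is at least half of $\prod_{q<z^{\eta}}(1-\rho(q)/q)^{-1}$. Then $\prod_{z^{\eta}\le q<z}(1-\rho(q)/q)\ge\prod_{z^{\eta}\le q<z}(1-2/q)\gg\eta^{2}$, uniformly in $h$. Quoting Halberstam--Richert, as in your main line, avoids the issue altogether.
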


\begin{lemma}[$F(h)$ is $O(\log \log h)$]\label{lem:F-bound}
There exist absolute constants $C_F \geq 1$ and $h_0 \geq 3$ such that for all integers $h \geq h_0$,
\[
F(h) \leq C_F \log \log h.
\]
In particular, for all $h$ with $2 \leq h \leq 2X$ and $X$ sufficiently large, $F(h) \leq C_F \log \log X$.
\end{lemma}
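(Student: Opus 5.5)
To prove Lemma~\ref{lem:F-bound}, I would first bound $F(h)$ by a product over the odd primes dividing $h$, and then replace those primes by the smallest possible ones. For each odd prime $q$ we have
\[
\frac{q-1}{q-2} = \Bigl(1 - \frac{1}{q-1}\Bigr)^{-1},
\]
so $F(h)$ depends only on the set of distinct odd prime divisors of $h$. Each factor exceeds $1$ and decreases in $q$. So if $h$ has $k$ distinct odd prime factors, $F(h)$ is at most the product of the same factor over the first $k$ odd primes $q_1 < \dots < q_k$, i.e.\ $3, 5, 7, \dots$. These primes also satisfy $q_1 \cdots q_k \le h$, which ties $k$ to $h$.

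Next I would compare with a Mertens-type product. Using $\frac{q-1}{q-2} = \frac{q}{q-1}\cdot \frac{(q-1)^2}{q(q-2)}$, we get
\[
F(h) \le \prod_{3 \le q \le q_k} \frac{q}{q-1} \cdot \prod_{q>2} \frac{(q-1)^2}{q(q-2)}.
\]
The second product is an absolute constant: it is $1/C_2$, the reciprocal of the twin prime constant, and it converges because each factor is $1 + \frac{1}{q(q-2)}$. By Mertens' third theorem, $\prod_{q \le y} \frac{q}{q-1} \sim e^{\gamma} \log y$, so the first product is $\ll \log q_k$.

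The remaining step is to show that $\log q_k \ll \log\log h$. Since $\prod_{q \le q_k} q = e^{\theta(q_k)} \le 2h$, Chebyshev's bound $\theta(y) \gg y$ for $y \ge 2$ gives $q_k \ll \log h$. Hence $\log q_k \le \log\log h + O(1)$. Combining the two steps yields $F(h) \le C' \log\log h + C''$, and for $h \ge h_0$ with $h_0$ large (so that $\log\log h \ge 1$) this absorbs into $C_F \log\log h$. The case $k = 0$ gives $F(h) = 1$ and is trivial.

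The \enquote{in particular} clause then follows by splitting on the size of $h$. For $h_0 \le h \le 2X$, monotonicity of $\log\log$ gives $\log\log h \le \log\log(2X) \le 2\log\log X$ for large $X$, and the factor $2$ is absorbed into $C_F$. For $2 \le h < h_0$, $F(h)$ is bounded by an absolute constant, which is at most $C_F \log\log X$ once $X$ is large.

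The main obstacle is purely bookkeeping: the step from ``$k$ primes with product at most $h$'' to $q_k \ll \log h$ needs an explicit Chebyshev lower bound rather than just the asymptotic, so that the constants stay absolute. Everything else is standard Mertens and Euler-product manipulation.
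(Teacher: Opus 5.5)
Your proof is correct and is essentially the paper's argument: the paper also splits off the convergent constant $\prod_{q>2}(q-1)^2/(q(q-2))$ to get $F(h)\le C_*\,h/\varphi(h)$, and then cites the classical bound $h/\varphi(h)\ll\log\log h$, whose standard proof is your smallest-primes, Mertens, and Chebyshev argument. Your handling of the ``in particular'' clause is a step the paper leaves implicit: you enlarge the constant to pass from $\log\log(2X)$ to $\log\log X$, and you bound the finitely many $h<h_0$ separately.
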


\begin{proof}
We have
\[
F(h) = \prod_{\substack{q \mid h \\ q > 2}} \frac{q-1}{q-2} \leq C_* \prod_{q \mid h} \frac{q}{q-1} = C_* \frac{h}{\vp(h)},
\]
where $C_* < \infty$ is a convergent product constant. The classical bound $h/\vp(h) \ll \log \log h$ completes the proof.
\end{proof}

\subsection{From prime pairs to successor gaps}

\begin{lemma}[Successor gaps dominated by prime pairs]\label{lem:gap-pair}
For every $g \geq 1$ and $X \geq 2$,
\[
\#\{p \in [X, 2X] \cap \PP : g(p) = g\} \leq A_g(X).
\]
\end{lemma}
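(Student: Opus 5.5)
The plan is a direct set inclusion, and the lemma follows by comparing cardinalities. Fix $g \geq 1$ and $X \geq 2$, and let
\[
B_g(X) := \{p \in [X,2X] \cap \PP : g(p) = g\}.
\]
We will show that
\[
B_g(X) \subseteq \{p \in [X,2X] \cap \PP : p+g \in \PP\},
\]
whose cardinality is by definition $A_g(X)$.

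Take $p \in B_g(X)$. By the definition of $g(p)$ in Section~\ref{sec:preliminaries}, $g(p) = p^+ - p$, where $p^+$ is the least prime greater than $p$. Hence $p + g = p + g(p) = p^+$. Since $p^+$ is prime, $p+g \in \PP$. Also $p \in [X,2X]\cap\PP$ by assumption. So $p$ is counted by $A_g(X)$, which gives the inclusion. Taking cardinalities yields
\[
\#B_g(X) \leq A_g(X).
\]

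The inequality is in general strict. A prime pair $(p, p+g)$ need not consist of consecutive primes, because an intermediate prime may lie strictly between $p$ and $p+g$. This loss is harmless for the intended application: we only need an upper bound, which will then be fed into Theorem~\ref{thm:pair-sieve}.

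There is no substantive obstacle. The only point needing care is that the argument must hold for every $g \geq 1$, including odd $g$ and $g$ larger than typical gaps. Nothing in the inclusion uses parity or size, since it relies only on $p+g(p)$ being prime. For odd $g$ with $p$ odd, both sides are typically $0$ or very small, and the inequality still holds trivially. No sieve input or use of Bertrand's postulate is required at this step.
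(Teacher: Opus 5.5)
Your proposal is correct and matches the paper's proof: the paper likewise observes that $g(p)=g$ forces $p+g=p^+$ to be prime, so every such $p$ is counted by $A_g(X)$. Your explicit set inclusion and the remark that the argument is independent of the parity and size of $g$ are fine additions.
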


\begin{proof}
If $g(p) = g$ then $p$ and $p + g = p^+$ are both prime.
\end{proof}

\begin{lemma}[Gap range bound]\label{lem:gap-range-bound}
For $X \geq 3$ and every prime $p \in [X, 2X]$, we have $g(p) < p \leq 2X$ and $g(p)$ is even (for $p \geq 3$).
\end{lemma}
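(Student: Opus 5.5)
The statement has two parts: the bound $g(p) < p \le 2X$, and the parity of $g(p)$ for $p \ge 3$. I will prove them separately.

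For the upper bound, I will use Bertrand's postulate in the form already invoked in \eqref{eq:gap-range}. For every integer $n \ge 1$ there is a prime $q$ with $n < q \le 2n$. Applying this with $n = p$ gives a prime in $(p, 2p]$, so the least prime greater than $p$ satisfies $p^+ \le 2p$, and hence $g(p) = p^+ - p \le p$. To get strict inequality, I note that $p^+ = 2p$ is impossible, since $2p$ is composite for $p \ge 2$. Therefore $p^+ < 2p$ and $g(p) < p$. Since $p \in [X,2X]$, we have $p \le 2X$, which gives the chain $g(p) < p \le 2X$. This sharpens \eqref{eq:gap-range}.

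For parity, let $p \ge 3$ be a prime, so that $p$ is odd. Its successor satisfies $p^+ > p \ge 3$, so $p^+$ is a prime different from $2$ and is also odd. The difference of two odd integers is even, so $g(p)$ is even. The hypothesis $X \ge 3$ guarantees that every prime in $[X,2X]$ is at least $3$, so the parity conclusion holds for all primes in that range. The parenthetical in the statement only matters for $p = 2$, where $g(2) = 1$, and that case is excluded here.

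There is no real obstacle. The only point needing care is the strict inequality $g(p) < p$, which comes from observing that $2p$ is not prime; if only $g(p) \le p$ were needed, Bertrand's postulate alone would suffice. I will present both parts as a short direct argument, citing Bertrand's postulate as a classical fact.
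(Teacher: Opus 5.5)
Your proof is correct and follows essentially the paper's route. The paper gives no separate proof for this lemma; it relies on Bertrand's postulate (see \eqref{eq:gap-range} and Lemma~\ref{lem:bertrand}), whose open-interval form, a prime in $(n,2n)$, gives $p^+<2p$ directly where you instead exclude $p^+=2p$ because $2p$ is composite, and on the same odd-minus-odd parity observation.
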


\subsection{Main theorem}

Define the exceptional primes:
\[
P_{\mathrm{low}}(X; c) := \{p \in \cP_X : g(p) \in S_c(X)\}.
\]

\begin{proposition}[Main bound on the exceptional fraction]\label{prop:exceptional-bound}
Fix $c \in (0,1)$. There exist constants $C_0 \geq 1$ and $X_2$ such that for all $X \geq X_2$,
\[
\frac{|P_{\mathrm{low}}(X; c)|}{\pi_X} \leq C_0 \frac{\log \log X}{(\log X)^{1-c}}.
\]
\end{proposition}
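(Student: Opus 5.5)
The plan is a union bound over the low-complexity gap values, with each term controlled by the prime-pair sieve. Write
\[
|P_{\mathrm{low}}(X;c)| = \sum_{g \in S_c(X)} \#\{p \in \cP_X : g(p) = g\}.
\]
By Lemma~\ref{lem:gap-range-bound}, every $p \in \cP_X$ with $p \ge 3$ (automatic once $X \ge 3$) has $g(p)$ even with $2 \le g(p) \le 2X$. So only even $g \in S_c(X) \cap [2,2X]$ contribute, and the other terms vanish. For each contributing $g$, Lemma~\ref{lem:gap-pair} bounds the summand by $A_g(X)$. Theorem~\ref{thm:pair-sieve} then gives
\[
A_g(X) \le C_{\mathrm{pair}} F(g) \frac{X}{(\log X)^2},
\]
valid for $X \ge X_1$.

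Next I make the factor $F(g)$ uniform in $g$. Lemma~\ref{lem:F-bound} gives $F(g) \le C_F \log\log X$ for all $g \le 2X$ with $g \ge h_0$, once $X$ is large. For the finitely many $g < h_0$, $F(g)$ is bounded by an absolute constant, which is at most $\log\log X$ for large $X$. Hence every contributing term is at most $C_{\mathrm{pair}} C_F' \log\log X \cdot X/(\log X)^2$. Multiplying by the number of terms, bounded via Lemma~\ref{lem:kolmogorov-count} by $|S_c(X)| \le (\logtwo X)^c = (\log 2)^{-c}(\log X)^c$, gives
\[
|P_{\mathrm{low}}(X;c)| \ll \frac{X \log\log X}{(\log X)^{2-c}}.
\]

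Finally I divide by $\pi_X$. By the prime number theorem, or just a Chebyshev lower bound, $\pi_X = \pi(2X) - \pi(X) \ge X/(2\log X)$ for all $X \ge X_3$. This yields the ratio bound $C_0 \log\log X/(\log X)^{1-c}$, with $C_0 = 2 C_{\mathrm{pair}} C_F' (\log 2)^{-c}$ or similar, for all $X \ge X_2 := \max(X_1, X_3, \ldots)$.

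The main subtlety, rather than an obstacle, is that the sieve bound is needed uniformly for all even $h \le 2X$. This is exactly what Theorem~\ref{thm:pair-sieve} provides, so arbitrarily large Kolmogorov-simple gap values (for instance $g$ near $X$) cause no trouble. The secondary point is absorbing the small-$g$ range of Lemma~\ref{lem:F-bound} into the constant. Note that $c<1$ plays no role in the inequality itself; it is needed only so that the bound tends to $0$, which the subsequent theorem uses.
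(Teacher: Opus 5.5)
Your proposal is correct and follows essentially the same route as the paper. Both arguments take a union bound over $S_c(X)\cap[1,2X]$, bound each term by $A_g(X)\le C_{\mathrm{pair}}F(g)X/(\log X)^2$, use the uniform bound $F(g)\ll\log\log X$ together with $|S_c(X)|\ll(\log X)^c$, and divide by $\pi_X\gg X/\log X$. Your two extra points of care, restricting to even $g$ (which the pair-sieve theorem formally requires) and absorbing the range $g<h_0$ of the $F$-bound into the constant, tighten details the paper passes over without changing the argument.
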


\begin{proof}
Partition by gap value:
\[
|P_{\mathrm{low}}(X; c)| = \sum_{g \in S_c(X)} \#\{p \in [X, 2X] \cap \PP : g(p) = g\} \leq \sum_{g \in S_c(X)} A_g(X).
\]
Only gaps $g \leq 2X$ can occur, so we restrict to $S_c(X) \cap [1, 2X]$. Applying Theorem~\ref{thm:pair-sieve}:
\[
|P_{\mathrm{low}}(X; c)| \ll \frac{X}{(\log X)^2} \sum_{g \in S_c(X) \cap [1, 2X]} F(g).
\]
By Lemmas~\ref{lem:kolmogorov-count} and \ref{lem:F-bound}:
\[
\sum_{g \in S_c(X) \cap [1, 2X]} F(g) \leq |S_c(X)| \cdot C_F \log \log X \ll (\log X)^c \log \log X.
\]
Thus
\[
|P_{\mathrm{low}}(X; c)| \ll X \frac{(\log X)^c \log \log X}{(\log X)^2} = X \frac{\log \log X}{(\log X)^{2-c}}.
\]
Using $\pi_X \gg X / \log X$:
\[
\frac{|P_{\mathrm{low}}(X; c)|}{\pi_X} \ll \frac{\log \log X}{(\log X)^{1-c}}.
\]
\end{proof}

\begin{theorem}[PSI-K$(c,\delta)$ for all $c < 1$]\label{thm:PSIK-unconditional}
Fix $c \in (0,1)$ and $\delta \in (0,1)$. Then there exists $X_0$ such that for all $X \geq X_0$, at least a proportion $1 - \delta$ of primes $p \in [X, 2X]$ satisfy
\[
\K(\code(g(p)) \mid X) \geq c \logtwo \logtwo X.
\]
Equivalently, PSI-K$(c,\delta)$ holds for every fixed $c < 1$.
\end{theorem}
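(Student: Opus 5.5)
The theorem follows directly from Proposition~\ref{prop:exceptional-bound}; the only remaining work is to choose $X_0$.

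Fix $c\in(0,1)$ and $\delta\in(0,1)$. By definition of $S_c(X)$, a prime $p\in\cP_X$ fails the inequality
\[
\K(\code(g(p))\mid X)\ge c\logtwo\logtwo X
\]
exactly when $g(p)\in S_c(X)$. That is, the failing primes are precisely the set $P_{\mathrm{low}}(X;c)$. So it suffices to show that $|P_{\mathrm{low}}(X;c)|/\pi_X\le\delta$ for all large $X$.

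Proposition~\ref{prop:exceptional-bound} gives, for $X\ge X_2$,
\[
\frac{|P_{\mathrm{low}}(X;c)|}{\pi_X}\le C_0\frac{\log\log X}{(\log X)^{1-c}}.
\]
Since $1-c>0$, the right-hand side tends to $0$ as $X\to\infty$, because $\log\log X=o((\log X)^{\epsilon})$ for every $\epsilon>0$. Hence there is $X_3$ with $C_0\log\log X/(\log X)^{1-c}\le\delta$ for $X\ge X_3$. Taking $X_0=\max(X_2,X_3)$, the proportion of primes satisfying the complexity bound is at least $1-\delta$. This is the statement of PSI-K$(c,\delta)$.

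The substantive inputs are already contained in the proposition: the counting bound $|S_c(X)|\le(\logtwo X)^c$ (Lemma~\ref{lem:kolmogorov-count}), and the uniform pair-sieve bound with $F(g)\ll\log\log X$ for all $g\le 2X$ (Theorem~\ref{thm:pair-sieve} and Lemma~\ref{lem:F-bound}). The main point to check is uniformity: the sieve constant $C_{\mathrm{pair}}$ must be independent of $h$ throughout the range $h\le 2X$, because $S_c(X)$ may contain gaps as large as $X$.

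Two side issues also need checking. First, odd $g$ never occur as gaps for $p\ge3$ (Lemma~\ref{lem:gap-range-bound}), so restricting the sieve theorem to even $h$ loses nothing. Second, the lower bound $\pi_X\gg X/\log X$ follows from the prime number theorem, or from Chebyshev's estimates.

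With these checked, the limit argument above completes the proof. In fact the exceptional proportion is $o(1)$, which is stronger than the required bound $\delta$.
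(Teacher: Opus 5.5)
Your proposal is correct and follows the paper's proof: you identify the failing primes with $P_{\mathrm{low}}(X;c)$, invoke Proposition~\ref{prop:exceptional-bound}, and use $1-c>0$ to choose $X_0$ so that $C_0\log\log X/(\log X)^{1-c}\le\delta$. Your extra remarks on uniformity of the pair sieve over $h\le 2X$, the parity of gaps, and $\pi_X\gg X/\log X$ are checks that already sit inside the proof of that proposition.
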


\begin{proof}
By Proposition~\ref{prop:exceptional-bound}, the fraction of exceptional primes is at most $C_0 (\log \log X)/(\log X)^{1-c}$. Since $1 - c > 0$, this tends to $0$ as $X \to \infty$. Choose $X_0$ such that this fraction is at most $\delta$ for all $X \geq X_0$.
\end{proof}

\begin{remark}[Why this argument does not reach $c = 1$]
The proof uses a union bound over at most $(\log X)^c$ candidate gap values. When $c \uparrow 1$, the bound approaches about $\log X$ candidates, and the sieve bound for each fixed gap is too weak to force the union bound to vanish.
\end{remark}

% ============================================================
\section{PSI-W: Sparse-Set Anti-Concentration}
\label{sec:PSI-W}
% ============================================================

We now formulate and prove ``weakness'' relaxations of PSI-K that provide clean anti-concentration and entropy bounds. As described in Section~\ref{subsec:quantale-weakness}, the term ``weakness'' originates from its use in cognitive science and AGI theory \cite{bennett-thesis, goertzel-weakness}, where it measures how much a representation ``commits to''---weak representations make minimal assumptions and thus apply broadly.

In the quantale-weakness framework, PSI-W corresponds to bounding the weakness of small candidate sets in the probabilistic quantale $([0,1], \times, +, 1, \leq)$. Specifically, for a finite set $S$ of gap values, we measure the ``coverage weakness''
\[
w(S) := \sum_{h \in S} \mu_X(h) = \Pr[G_X \in S],
\]
which is a degenerate case of formula~\eqref{eq:quantale-weakness} where the relation $H$ consists of all pairs $(h,h)$ for $h \in S$ (the diagonal), and the valuation is $\mu(h) = \sqrt{\mu_X(h)}$. The key result is that this weakness is small whenever $|S|$ is subpolynomial in $\log X$.

\subsection{Definition of PSI-W}

\begin{definition}[PSI-W: sparse-set anti-concentration]\label{def:psi-w}
Fix $c \in (0,1)$. We say \emph{PSI-W$(c)$ holds} if
\[
\sup_{\substack{S \subseteq \N \\ |S| \leq (\log X)^c}} \Pr[G_X \in S] \xrightarrow[X \to \infty]{} 0.
\]
\end{definition}

\subsection{Bounds on individual gap probabilities}

\begin{lemma}[Each fixed gap value is rare]\label{lem:single-atom}
There is an absolute constant $C > 0$ such that for all $X \geq 3$ and all $1 \leq h \leq 2X$,
\[
\mu_X(h) \leq C \frac{\log \log(3X)}{\log X}.
\]
\end{lemma}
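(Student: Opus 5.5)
The plan is to bound the numerator $\#\{p \in \cP_X : g(p) = h\}$ by the prime-pair count $A_h(X)$, bound $A_h(X)$ with the sieve, and divide by a Chebyshev lower bound for $\pi_X$. All three ingredients are already in place. The only real care needed is that the statement is claimed for \emph{all} $X \geq 3$ and all $1 \leq h \leq 2X$, not just for large $X$ and even $h$. For this reason the argument is written with $\log\log(3X)$ rather than $\log\log X$.

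\textbf{Step 1 (trivial cases).} If $h$ is odd, then for any prime $p \geq 3$ the number $p+h$ is even and larger than $2$, so it cannot be prime. Since $X \geq 3$, every $p \in \cP_X$ is odd, and Lemma~\ref{lem:gap-range-bound} gives $\mu_X(h) = 0$. So we may assume $h$ is even. Next, for $3 \leq X < \max(X_1, X_2')$, where $X_2'$ is a threshold chosen below, there are only finitely many such scales. On this range $\mu_X(h) \leq 1$, while $\log\log(3X)/\log X$ is bounded below by a positive constant, since $\log\log(3X) \geq \log\log 9 > 0$ and $\log X$ is bounded. Enlarging $C$ therefore covers all these $X$. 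This is precisely why the statement uses $\log\log(3X)$: it keeps the numerator positive for small $X$.

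\textbf{Step 2 (main range).} For $X \geq X_1$ and even $h \in [2, 2X]$, Lemma~\ref{lem:gap-pair} and Theorem~\ref{thm:pair-sieve} give
\[
\#\{p \in \cP_X : g(p) = h\} \leq A_h(X) \leq C_{\mathrm{pair}} F(h) \frac{X}{(\log X)^2}.
\]
To bound $F(h)$ uniformly in $h \leq 2X$, split by the size of $h$:
\begin{itemize}
\item For $h \geq h_0$, Lemma~\ref{lem:F-bound} gives $F(h) \leq C_F \log\log h \leq C_F \log\log(4X)$, and $\log\log(4X) \ll \log\log(3X)$.
\item For $h < h_0$, $F(h)$ takes only finitely many values and so is bounded by an absolute constant. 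This constant is in turn $\ll \log\log(3X)$, because $\log\log(3X) \geq \log\log 9 > 0$.
\end{itemize}
In both cases $F(h) \ll \log\log(3X)$.

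\textbf{Step 3 (divide by $\pi_X$).} A Chebyshev-type lower bound gives $\pi_X = \pi(2X) - \pi(X) \gg X/\log X$ for $X \geq X_2'$. This follows from the prime number theorem, or from an explicit bound such as $\pi(2X) - \pi(X) > X/(3\log(2X))$ for large $X$, which is Ramanujan/Bertrand-strength. Dividing,
\[
\mu_X(h) \ll \frac{\log\log(3X)\, X/(\log X)^2}{X/\log X} = \frac{\log\log(3X)}{\log X}.
\]

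The main obstacle is not analytic; Theorem~\ref{thm:pair-sieve} carries all the analytic weight. It is the bookkeeping of uniformity: making sure $F(h)$ is bounded uniformly for small $h$, and that the thresholds $X_1$, $h_0$ and the range of validity of the $\pi_X$ lower bound are all absorbed into a single absolute constant $C$ valid for every $X \geq 3$.
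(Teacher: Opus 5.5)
Your proposal is correct and follows the paper's own proof. That proof bounds the number of primes with $g(p)=h$ by $A_h(X)$, applies Theorem~\ref{thm:pair-sieve} with Lemma~\ref{lem:F-bound} to get $F(h)\ll\log\log(3X)$, and divides by $\pi_X \gg X/\log X$. Your extra bookkeeping fills in uniformity details that the paper's one-line argument leaves implicit: odd $h$, small $h$, and the bounded range $3 \le X < \max(X_1, X_2')$, which you absorb into $C$ using $\log\log(3X) \ge \log\log 9 > 0$.
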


\begin{proof}
If $g(p) = h$ then $p$ and $p + h$ are both prime, hence
\[
\#\{p \in \cP_X : g(p) = h\} \leq A_h(X).
\]
Therefore $\mu_X(h) \leq A_h(X) / \pi_X$. Using Theorem~\ref{thm:pair-sieve} and $\pi_X \gg X / \log X$:
\[
\mu_X(h) \ll \frac{F(h)}{\log X} \ll \frac{\log \log(3X)}{\log X}.
\]
\end{proof}

\subsection{Menu bound}

\begin{theorem}[Menu bound for gaps]\label{thm:menu-single}
There is an absolute constant $C > 0$ such that for all $X \geq 3$ and all finite sets $S \subseteq \{1, 2, \ldots, 2X\}$,
\[
\Pr[G_X \in S] \leq C \frac{|S| \log \log(3X)}{\log X}.
\]
\end{theorem}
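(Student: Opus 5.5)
The plan is to apply countable additivity of $\mu_X$ and then bound each term with Lemma~\ref{lem:single-atom}. Since $S$ is finite,
\[
\Pr[G_X \in S] = \sum_{h \in S} \mu_X(h).
\]
Every $h \in S$ lies in $\{1,\dots,2X\}$, which is exactly the range where Lemma~\ref{lem:single-atom} applies. That lemma gives $\mu_X(h) \leq C\,\log\log(3X)/\log X$ with an absolute constant $C$, uniformly in $h$. Summing over the $|S|$ terms gives
\[
\Pr[G_X \in S] \leq C\,\frac{|S|\log\log(3X)}{\log X}.
\]
We may keep the same constant $C$. It does no harm that odd $h$ (other than $h=1$, which can only arise from $p=2$) have $\mu_X(h)=0$: those terms only make the sum smaller.

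The substantive step is making sure Lemma~\ref{lem:single-atom} really holds for \emph{all} $X \geq 3$ with an absolute constant, since Theorem~\ref{thm:pair-sieve} is stated only for $X \geq X_1$ and even $h$. I would handle this as follows.

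\begin{enumerate}[leftmargin=2em]
\item For $X \geq X_1$ and even $h$, combine Theorem~\ref{thm:pair-sieve} with Lemma~\ref{lem:F-bound}. For small $h<h_0$, $F(h)$ is bounded by a constant, which is absorbed since $\log\log(3X)$ is bounded below by a positive constant for $X \geq 3$. Add the lower bound $\pi_X \gg X/\log X$, which follows from Chebyshev bounds for $X$ large.
\item For odd $h$ the only possible contribution is $p=2$, which lies in $[X,2X]$ only when $X \leq 2$. So this case is vacuous for $X \geq 3$.
\item For $3 \leq X < X_1$ there are finitely many scales up to a bounded range. Since $\pi_X \geq 1$ by Bertrand's postulate, $\mu_X(h) \leq 1$ trivially. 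Meanwhile $\log\log(3X)/\log X$ is bounded below by a positive constant on this compact range, so enlarging $C$ covers it.
\end{enumerate}

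With the uniform atom bound secured, the theorem follows from the one-line union bound above. The bound is only informative when $|S| = o(\log X/\log\log X)$. In particular, it yields PSI-W$(c)$ for every $c<1$, exactly as in Proposition~\ref{prop:exceptional-bound}.
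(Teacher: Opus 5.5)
Your proposal is correct and follows the paper's proof exactly. You write $\Pr[G_X \in S] = \sum_{h \in S} \mu_X(h)$ and bound each atom uniformly by Lemma~\ref{lem:single-atom}. Your extra checks fill in uniformity details that the paper's sketch of Lemma~\ref{lem:single-atom} leaves implicit: odd $h$ are vacuous for $X \geq 3$, and the range $3 \leq X < X_1$ is absorbed into the constant via $\mu_X(h) \leq 1$.
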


\begin{proof}
By union bound and Lemma~\ref{lem:single-atom}:
\[
\Pr[G_X \in S] = \sum_{h \in S} \mu_X(h) \leq |S| \cdot \max_{h \in S} \mu_X(h) \leq C \frac{|S| \log \log(3X)}{\log X}.
\]
\end{proof}

\begin{corollary}[PSI-W$(c)$ for every fixed $c < 1$]\label{cor:psi-w}
Fix $c \in (0,1)$. Then PSI-W$(c)$ holds.
\end{corollary}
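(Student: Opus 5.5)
The plan is to derive Corollary~\ref{cor:psi-w} directly from the menu bound, Theorem~\ref{thm:menu-single}. Fix $c \in (0,1)$ and let $S \subseteq \N$ be any set with $|S| \leq (\log X)^c$.

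The first step is to reduce to sets of gap values that can actually occur. By \eqref{eq:gap-range}, $G_X \leq 2X$ always, so
\[
\Pr[G_X \in S] = \Pr[G_X \in S \cap \{1,\ldots,2X\}].
\]
The set $S' := S \cap \{1,\ldots,2X\}$ is finite with $|S'| \leq |S| \leq (\log X)^c$. This matters because Theorem~\ref{thm:menu-single} is only stated for finite subsets of $\{1,\ldots,2X\}$, while the supremum in Definition~\ref{def:psi-w} ranges over arbitrary subsets of $\N$.

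The second step applies Theorem~\ref{thm:menu-single} to $S'$. For $X \geq 3$ it gives
\[
\Pr[G_X \in S] \leq C \frac{|S'| \log\log(3X)}{\log X} \leq C \frac{\log\log(3X)}{(\log X)^{1-c}}.
\]
The right-hand side does not depend on $S$, so it also bounds the supremum over all admissible $S$.

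It remains to check that this uniform bound tends to $0$. Since $1-c>0$ and $\log\log(3X) = o\bigl((\log X)^{\varepsilon}\bigr)$ for every $\varepsilon>0$, the bound goes to $0$ as $X\to\infty$, which proves PSI-W$(c)$. There is no genuine obstacle here. The only care needed is the reduction to $S' \subseteq [1,2X]$ and keeping the constant $C$ uniform in $S$, which is exactly what Theorem~\ref{thm:menu-single} provides.
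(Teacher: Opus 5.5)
Your proof is correct and matches the paper's argument: apply Theorem~\ref{thm:menu-single} with $|S|\le(\log X)^c$ to get the uniform bound $\ll (\log X)^{c-1}\log\log(3X)\to 0$. Your explicit reduction to $S\cap\{1,\ldots,2X\}$ via \eqref{eq:gap-range} supplies a detail the paper leaves implicit, since the menu bound is stated only for finite subsets of $\{1,\ldots,2X\}$.
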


\begin{proof}
If $|S| \leq (\log X)^c$, then Theorem~\ref{thm:menu-single} gives
\[
\Pr[G_X \in S] \ll (\log X)^{c-1} \log \log(3X) \xrightarrow[X \to \infty]{} 0.
\]
\end{proof}

\begin{remark}
PSI-W is the ``set-valued'' relaxation of PSI-K: instead of saying the gap is hard to \emph{name} with few bits, it says the gap is hard to \emph{cover} by any short list of candidate values.
\end{remark}

% ============================================================
\section{PSI-W-LE: Logical Entropy and Collision Probability}
\label{sec:PSI-W-LE}
% ============================================================

\subsection{Logical entropy as quantale weakness}

The logical entropy formulation provides the most direct instantiation of quantale weakness in this paper. Recall from Section~\ref{subsec:quantale-weakness} that in the probabilistic quantale $([0,1], \times, +, 1, \leq)$ with valuation $\mu(h) = \mu_X(h)$ (the gap probability), the quantale weakness of the ``same-value'' relation $H_= = \{(h,h) : h \in \supp(\mu_X)\}$ is
\[
w(H_=) = \sum_{h} \mu_X(h) \otimes \mu_X(h) = \sum_{h} \mu_X(h)^2,
\]
which is exactly the \emph{collision probability}. The complement $1 - w(H_=)$ is Ellerman's \emph{logical entropy} \cite{goertzel-weakness}, which measures the probability that two independent samples from the distribution are \emph{distinct}.

\subsection{Definition}

\begin{definition}[PSI-W-LE: logical entropy weakness]\label{def:psi-w-le}
Let $\mu_X$ be the gap distribution. Define the \emph{collision probability} (quantale weakness of the identity relation)
\[
w_X := \sum_{h \in \N} \mu_X(h)^2 = \Pr[G_X = G_X'],
\]
where $G_X'$ is an independent copy of $G_X$. Define the \emph{logical entropy}
\[
\ell_X := 1 - w_X.
\]
We say \emph{PSI-W-LE holds} if $w_X \to 0$ (equivalently $\ell_X \to 1$) as $X \to \infty$.
\end{definition}

The quantity $w_X$ is the quantale weakness from formula~\eqref{eq:quantale-weakness} applied to the diagonal relation in the probabilistic quantale. High logical entropy $\ell_X \to 1$ means the gap distribution becomes maximally ``spread out'' in the sense that random collisions become vanishingly rare.

\subsection{Collision probability decay}

\begin{theorem}[Collision probability decays]\label{thm:collision-single}
There is an absolute constant $C > 0$ such that for all $X \geq 3$,
\[
w_X = \sum_{h \in \N} \mu_X(h)^2 \leq C \frac{\log \log(3X)}{\log X}.
\]
Equivalently, the logical entropy satisfies
\[
\ell_X = 1 - w_X \geq 1 - C \frac{\log \log(3X)}{\log X}.
\]
\end{theorem}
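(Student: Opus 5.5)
The plan is to bound the collision probability by the largest atom of $\mu_X$, using the trivial inequality
\[
w_X = \sum_{h} \mu_X(h)^2 \leq \Bigl(\max_{h} \mu_X(h)\Bigr) \sum_{h} \mu_X(h) = \max_{h} \mu_X(h),
\]
which holds because $\mu_X$ is a probability mass function. The bound on the maximal atom will then come directly from Lemma~\ref{lem:single-atom}.

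First, I will check that the support of $\mu_X$ lies in $\{1,\dots,2X\}$. This follows from \eqref{eq:gap-range} (Bertrand's postulate), or from Lemma~\ref{lem:gap-range-bound}. So the maximum may be taken over $1 \leq h \leq 2X$, which is exactly the range covered by Lemma~\ref{lem:single-atom}.

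Second, I will apply Lemma~\ref{lem:single-atom}. It gives $\mu_X(h) \leq C\log\log(3X)/\log X$ uniformly in that range, with the same absolute constant $C$. This constant then serves for the theorem. The statement about logical entropy follows at once from $\ell_X = 1 - w_X$.

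The only delicate point is uniformity for all $X \geq 3$, as opposed to merely sufficiently large $X$. Lemma~\ref{lem:single-atom} is itself stated for all $X \geq 3$, so I will inherit that uniformity. If one worried about small $X$ (for instance below the threshold $X_1$ of Theorem~\ref{thm:pair-sieve}), it suffices to observe that $w_X \leq 1$ trivially. On the compact range $3 \leq X \leq X_1$, the quantity $\log\log(3X)/\log X$ is bounded below by a positive constant, so enlarging $C$ absorbs these cases. I therefore expect no real obstacle; all the arithmetic content sits in the sieve bound already established.
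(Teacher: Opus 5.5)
Your proposal is correct and matches the paper's proof: bound $w_X$ by $\max_h \mu_X(h)$ using $\sum_h \mu_X(h) = 1$, then apply Lemma~\ref{lem:single-atom}. Your extra checks on the support of $\mu_X$ and on the small-$X$ range are harmless added care that the paper leaves implicit.
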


\begin{proof}
Since $\sum_h \mu_X(h) = 1$ and $\mu_X(h) \geq 0$:
\[
w_X = \sum_h \mu_X(h)^2 \leq \left(\max_h \mu_X(h)\right) \sum_h \mu_X(h) = \max_h \mu_X(h).
\]
Apply Lemma~\ref{lem:single-atom}.
\end{proof}

\begin{corollary}[Shannon entropy grows like $\log \log X$]\label{cor:entropy-single}
Let $H(G_X)$ be the Shannon entropy of $G_X$ in bits. Then
\[
H(G_X) \geq -\logtwo(w_X) \geq \logtwo \log X - O(\logtwo \log \log X).
\]
\end{corollary}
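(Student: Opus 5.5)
The plan has two steps. First, relate Shannon entropy to collision probability: the first inequality in the corollary is the standard fact that Rényi entropy of order $2$ is at most Shannon entropy. Second, insert the bound from Theorem~\ref{thm:collision-single} and simplify the logarithms.

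For the first inequality I will apply Jensen's inequality to the concave function $\logtwo$, using the probability weights $\mu_X(h)$. This gives
\[
\logtwo w_X = \logtwo \sum_h \mu_X(h)\,\mu_X(h) \geq \sum_h \mu_X(h)\logtwo \mu_X(h) = -H(G_X),
\]
that is, $H(G_X) \geq -\logtwo w_X$. The sum has finite support by \eqref{eq:gap-range}, so all expressions are finite. An alternative route avoids Jensen: since $\mu_X(h) \le \max_h \mu_X(h)$ for every $h$, one gets $H(G_X) \geq -\logtwo \max_h \mu_X(h)$ directly, and $\max_h\mu_X(h) \ge w_X$. Either way the same bound in terms of Lemma~\ref{lem:single-atom} results.

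For the second inequality I will take $-\logtwo$ of the bound in Theorem~\ref{thm:collision-single}:
\[
-\logtwo w_X \geq \logtwo \log X - \logtwo \log\log(3X) - \logtwo C.
\]
It then remains to absorb the error terms. For large $X$ we have $\log\log(3X) \le 2\log\log X$, so $\logtwo\log\log(3X) = \logtwo\log\log X + O(1)$. The additive constants $-\logtwo C$ and the $O(1)$ are dominated by $\logtwo\log\log X$ once $X$ is large. For bounded $X$, I read the $O(\cdot)$ with an implied constant chosen large enough that the inequality is trivial there, using $H(G_X)\ge 0$.

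The main obstacle is essentially none; the only real step is the Jensen/Rényi comparison. The one point to watch is the direction of the concavity: Jensen for concave $\logtwo$ gives $\logtwo \E[Y] \ge \E[\logtwo Y]$ with $Y=\mu_X(G_X)$, and that is the direction needed.
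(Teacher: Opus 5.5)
Your argument is correct and follows the paper's proof: Shannon entropy dominates the R\'enyi-2 entropy $-\log_2 w_X$, which you justify by Jensen where the paper only cites it as standard, and you then take $-\log_2$ of the bound in Theorem~\ref{thm:collision-single}. Two side remarks are slightly off but do not affect the main argument: the min-entropy ``alternative'' only gives $H(G_X)\ge -\log_2\max_h\mu_X(h)$, which does not yield the middle inequality $H(G_X)\ge -\log_2 w_X$ because $\max_h\mu_X(h)\ge w_X$ points the wrong way; and the $O(\cdot)$ should simply be read for $X$ sufficiently large, since $\log_2\log\log X$ vanishes at $X=e^{e}$ (and is undefined for $X\le e$), so enlarging the constant cannot dispose of bounded $X$.
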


\begin{proof}[Proof sketch]
The R\'enyi entropy of order 2 is $H_2(G_X) := -\logtwo(\sum_h \mu_X(h)^2) = -\logtwo(w_X)$. A standard inequality states that Shannon entropy dominates R\'enyi-2 entropy: $H(G_X) \geq H_2(G_X) = -\logtwo(w_X)$. 

By Theorem~\ref{thm:collision-single}, $w_X \leq C (\log \log(3X))/(\log X)$ for some constant $C$. Thus
\[
-\logtwo(w_X) \geq -\logtwo\left(\frac{C \log \log(3X)}{\log X}\right) = \logtwo(\log X) - \logtwo(C \log \log(3X)),
\]
which equals $\logtwo \log X - O(\logtwo \log \log X)$.
\end{proof}

% ============================================================
\section{Prime Constellations and Consecutive Gap Vectors}
\label{sec:constellations}
% ============================================================

\subsection{Prime constellations}

Fix integers $0 = h_0 < h_1 < \cdots < h_m$. The corresponding \emph{prime constellation} event is
\[
\cC_{\mathbf{h}}(n) := \bigwedge_{j=0}^m (n + h_j \text{ is prime}).
\]

\subsubsection{Sieve upper bound for $k$-tuples}

\begin{theorem}[Selberg/Brun upper bound for prime $k$-tuples]\label{thm:ktuple-sieve}
Fix $m \geq 1$. There exists a constant $C_m > 0$ such that for all $X \geq 3$ and all integer $m$-tuples $\mathbf{h} = (h_0, \ldots, h_m)$ with $0 = h_0 < \cdots < h_m \leq CX$,
\[
A_{\mathbf{h}}(X) := \#\{n \in [X, 2X] : n + h_0, \ldots, n + h_m \in \PP\} \leq C_m \mathfrak{S}(\mathbf{h}) \frac{X}{(\log X)^{m+1}},
\]
where $\mathfrak{S}(\mathbf{h}) \geq 0$ is the standard $k$-tuple singular series. Moreover, $\mathfrak{S}(\mathbf{h}) \ll_m (\log \log(3X))^{O_m(1)}$.
\end{theorem}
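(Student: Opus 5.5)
The plan is to prove the bound by the Selberg upper-bound sieve, applied to the polynomial $P_{\mathbf h}(n)=\prod_{j=0}^m (n+h_j)$ over $n\in[X,2X]$. Then we control the singular series separately.

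\textbf{Step 1: sieve setup.} Set $k=m+1$ and consider the sequence $\mathcal{A}=\{P_{\mathbf h}(n): X\le n\le 2X\}$. For a squarefree $d$, the count of $n\in[X,2X]$ with $d\mid P_{\mathbf h}(n)$ is $X\,\nu_{\mathbf h}(d)/d+O(\nu_{\mathbf h}(d))$. Here $\nu_{\mathbf h}$ is multiplicative, and $\nu_{\mathbf h}(q)$ is the number of distinct residues among $-h_0,\dots,-h_m \bmod q$, so $\nu_{\mathbf h}(q)\le\min(q,k)$. If some prime $q$ has $\nu_{\mathbf h}(q)=q$, the tuple is inadmissible. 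Then $\mathfrak S(\mathbf h)=0$, and $A_{\mathbf h}(X)$ is at most $O_m(1)$, since some $n+h_j$ must equal $q\le k$. For $m$ fixed these finitely many exceptional values are absorbed by the constant, or we treat them as the trivial case $X$ small.

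\textbf{Step 2: apply Selberg's sieve.} Take sifting level $z=X^{\theta}$ with $\theta=\theta_m$ small. Any $n$ counted by $A_{\mathbf h}(X)$ with $n>z$ has all $n+h_j$ free of prime factors $<z$, so $A_{\mathbf h}(X)\le S(\mathcal A,z)+O(z)$. The Selberg sieve for dimension $\kappa=k$ gives
\[
S(\mathcal A,z)\le \frac{X}{G(z)}+\sum_{d<z^2}3^{\omega(d)}|r_d|,
\qquad G(z)\ge c_k\prod_{q<z}\Bigl(1-\frac{\nu_{\mathbf h}(q)}{q}\Bigr)^{-1}.
\]
The lower bound for $G(z)$ is the standard one for a $k$-dimensional sieve. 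Its implied constant depends only on $k$, because $\nu_{\mathbf h}(q)\le k$ uniformly. The remainder is $\ll \sum_{d<z^2}(3k)^{\omega(d)}\ll z^2(\log z)^{3k}$, which is $O(X^{1/2})$ for $\theta<1/4$. We then write
\[
\prod_{q<z}\Bigl(1-\frac{\nu(q)}{q}\Bigr)=\prod_{q<z}\Bigl(1-\frac{\nu(q)}{q}\Bigr)\Bigl(1-\frac1q\Bigr)^{-k}\cdot\prod_{q<z}\Bigl(1-\frac1q\Bigr)^{k}.
\]
By Mertens the second factor is $\asymp_k (\log X)^{-k}$. The first factor is a truncation of $\mathfrak S(\mathbf h)$.

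\textbf{Step 3: truncation of the singular series (the main obstacle).} We must show the truncated product at $z$ is $\ll_m \mathfrak S(\mathbf h)$, uniformly in $\mathbf h$. For primes $q\nmid \Delta:=\prod_{i<j}(h_j-h_i)$ we have $\nu(q)=k$, and the local factor is $1+O_k(q^{-2})$, so the tail over these primes converges uniformly. Primes $q\mid\Delta$ with $q\ge z$ have $\nu(q)\ge 1$, so their factors are at least $(1-(k-1)/(q-1))$ roughly $\ge 1-O_k(1/q)$. There are $O_k(\log X/\log z)=O_m(1)$ such primes, because $\Delta\le (CX)^{k^2}$. Hence the omitted part of the product is bounded above and below by constants depending only on $m$. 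This gives $A_{\mathbf h}(X)\le C_m\mathfrak S(\mathbf h)X/(\log X)^{m+1}+O(X^{1/2})$. The error term is absorbed once we note that $\mathfrak S(\mathbf h)\gg_m (\log\log X)^{-O(1)}$ for admissible $\mathbf h$ (by the same estimate as below), or for small $X$ by enlarging $C_m$.

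\textbf{Step 4: upper bound on $\mathfrak S$.} For primes dividing $\Delta$ the local factor is at most $(1-1/q)^{-k}$. Therefore
\[
\mathfrak S(\mathbf h)\ll_k \prod_{q\mid\Delta}\Bigl(1-\frac1q\Bigr)^{-k}=\Bigl(\frac{\Delta}{\vp(\Delta)}\Bigr)^{k}\ll(\log\log\Delta)^k\ll_m(\log\log 3X)^{m+1},
\]
using the bound $h/\vp(h)\ll\log\log h$ from the proof of Lemma~\ref{lem:F-bound} and $\log\Delta\ll_m\log X$.
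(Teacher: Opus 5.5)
Your argument is correct. It is the standard Selberg-sieve derivation that the paper takes for granted: Theorem~\ref{thm:ktuple-sieve} is stated there as a quoted Selberg/Brun bound with no proof. Your Step~4 estimate $\mathfrak{S}(\mathbf h)\le(\Delta/\varphi(\Delta))^{k}$ is the direct $k$-tuple analogue of the paper's proof of Lemma~\ref{lem:F-bound}, which uses $h/\varphi(h)\ll\log\log h$.

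One correction is needed in Step~1. Inadmissible tuples cannot be ``absorbed by the constant,'' because there the right-hand side is $C_m\cdot 0=0$. What actually holds is that $A_{\mathbf h}(X)=0$ as soon as $X>m+1$, since $n+h_j\ge X>q$. For $X\le m+1$ the inequality is genuinely false: take $m=2$, $\mathbf h=(0,2,4)$, $X=3$, $n=3$, which gives the primes $3,5,7$. So the theorem must be read for $X$ sufficiently large in terms of $m$.
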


\subsubsection{Fixed constellations are rare}

\begin{theorem}[A fixed constellation has small probability]\label{thm:constellation-fixed}
Fix $m \geq 1$ and offsets $0 = h_0 < h_1 < \cdots < h_m \leq 2X$. Then for $X$ large,
\[
\Pr_{P \in \cP_X}[\cC_{\mathbf{h}}(P)] \leq C_m \frac{\mathfrak{S}(\mathbf{h})}{(\log X)^m} \ll_m \frac{(\log \log(3X))^{O_m(1)}}{(\log X)^m}.
\]
\end{theorem}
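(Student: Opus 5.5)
The argument mirrors the proof of Lemma~\ref{lem:single-atom}, with Theorem~\ref{thm:ktuple-sieve} in place of the pair sieve. Sample $P$ uniformly from $\cP_X$. The event $\cC_{\mathbf{h}}(P)$ says that $P+h_0=P, P+h_1,\ldots,P+h_m$ are all prime. So the primes $p\in\cP_X$ for which it holds are exactly the integers $n\in[X,2X]$ with $n+h_j\in\PP$ for every $j$. (The condition $n=n+h_0\in\PP$ is automatic, because $h_0=0$ is one of the offsets.) This gives the identity
\[
\Pr_{P\in\cP_X}[\cC_{\mathbf{h}}(P)] \;=\; \frac{A_{\mathbf{h}}(X)}{\pi_X},
\]
which is exact rather than merely an upper bound.

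Next I bound the numerator using Theorem~\ref{thm:ktuple-sieve}. Its hypothesis $h_m\le CX$ is met because $h_m\le 2X$; I read the constant $C$ there as any fixed constant $\ge 2$. The theorem then gives $A_{\mathbf{h}}(X)\le C_m\mathfrak{S}(\mathbf{h})X/(\log X)^{m+1}$. For the denominator I use $\pi_X\gg X/\log X$ for $X$ large, which is the Chebyshev lower bound (or the prime number theorem) already used in Proposition~\ref{prop:exceptional-bound}. Dividing yields $\Pr\le C_m'\,\mathfrak{S}(\mathbf{h})/(\log X)^m$. The implied Chebyshev constant is absorbed into $C_m$, so the constant in the statement should be understood as one depending only on $m$. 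The exponent drops from $m+1$ to $m$ precisely because conditioning on $P$ being prime costs exactly one factor of $\log X$.

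For the second inequality I substitute the singular-series bound $\mathfrak{S}(\mathbf{h})\ll_m(\log\log(3X))^{O_m(1)}$, again from Theorem~\ref{thm:ktuple-sieve}. The only point needing care is that this bound must be uniform over all admissible tuples with $h_m\le 2X$. To see this directly, write $\mathfrak{S}(\mathbf{h})=\prod_q (1-\nu_q/q)(1-1/q)^{-(m+1)}$, where $\nu_q$ is the number of residues occupied by the $h_j$ modulo $q$. Split the product into two parts.
\begin{itemize}
\item For primes $q$ dividing none of the differences $D=\prod_{i<j}(h_j-h_i)$, we have $\nu_q=m+1$, and the factor is $1+O_m(q^{-2})$. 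The product over these primes converges.
\item For primes $q\mid D$, each factor is at most $(1-1/q)^{-(m+1)}$. Their product is $\ll (D/\varphi(D))^{m+1}$.
\end{itemize}
Since $D\le(2X)^{m(m+1)/2}$, we get $\log\log D\ll_m\log\log(3X)$. The classical bound $D/\varphi(D)\ll\log\log D$, the same one used in Lemma~\ref{lem:F-bound}, then gives the claim with exponent $m+1$.

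The main obstacle is this uniformity of the singular-series bound, together with keeping the constant $C_m$ independent of $\mathbf{h}$ in the sieve bound. Both are stated as part of Theorem~\ref{thm:ktuple-sieve}, so I will invoke that theorem and add the short Euler-product check above only as reassurance. If the tuple is inadmissible (some $\nu_q=q$), then $\mathfrak{S}(\mathbf{h})=0$. In that case at most $O_m(1)$ values of $n$ survive, namely those where some $n+h_j$ equals $q$ itself, and this is harmless for large $X$.
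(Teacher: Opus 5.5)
Your proposal is correct and follows the paper's proof: bound the count by $A_{\mathbf h}(X)$, apply Theorem~\ref{thm:ktuple-sieve}, divide by $\pi_X \gg X/\log X$, and invoke the crude singular-series bound. Your Euler-product check that $\mathfrak{S}(\mathbf h)\ll_m(\log\log(3X))^{m+1}$ holds uniformly is a sound addition that the paper omits. One small correction to your inadmissible-case remark: for $X>m+1$ no $n$ survives at all, since $n+h_j\ge X>q$, and that is exactly what the bound $0$ requires.
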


\begin{proof}[Proof sketch]
The number of primes $p \in [X, 2X]$ satisfying the constellation event $\cC_{\mathbf{h}}(p)$ is at most $A_{\mathbf{h}}(X)$, since if $p, p+h_1, \ldots, p+h_m$ are all prime, then $p$ is counted in $A_{\mathbf{h}}(X)$. By Theorem~\ref{thm:ktuple-sieve}:
\[
A_{\mathbf{h}}(X) \leq C_m \mathfrak{S}(\mathbf{h}) \frac{X}{(\log X)^{m+1}}.
\]
Dividing by the number of primes in $[X, 2X]$, which satisfies $\pi_X \gg X / \log X$ by the prime number theorem, yields
\[
\Pr_{P \in \cP_X}[\cC_{\mathbf{h}}(P)] \leq \frac{A_{\mathbf{h}}(X)}{\pi_X} \ll \frac{C_m \mathfrak{S}(\mathbf{h}) X / (\log X)^{m+1}}{X / \log X} = C_m \frac{\mathfrak{S}(\mathbf{h})}{(\log X)^m}.
\]
The crude bound $\mathfrak{S}(\mathbf{h}) \ll_m (\log \log(3X))^{O_m(1)}$ from Theorem~\ref{thm:ktuple-sieve} completes the proof.
\end{proof}

\begin{corollary}[Sparse-set anti-concentration for constellations]\label{cor:constellation-sparse}
Fix $m \geq 1$ and consider a finite collection $\mathcal{H}$ of offset patterns. Then for $X$ large,
\[
\Pr_{P \in \cP_X}[\exists \mathbf{h} \in \mathcal{H} : \cC_{\mathbf{h}}(P)] \ll_m \frac{|\mathcal{H}| \cdot (\log \log(3X))^{O_m(1)}}{(\log X)^m}.
\]
\end{corollary}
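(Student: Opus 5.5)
The plan is a union bound over the patterns in $\mathcal{H}$, combined with the single-pattern estimate of Theorem~\ref{thm:constellation-fixed}. First, I restrict attention to patterns that are admissible for the theorem. I will assume, as is implicit in the statement, that every $\mathbf{h} \in \mathcal{H}$ has length $m+1$ and satisfies $0 = h_0 < h_1 < \cdots < h_m \leq 2X$. A pattern with $h_m$ much larger than $X$ could not be handled by the sieve bound as stated, so this restriction is needed for Theorem~\ref{thm:constellation-fixed} to apply. The event is a finite union:
\[
\{\exists \mathbf{h} \in \mathcal{H} : \cC_{\mathbf{h}}(P)\} = \bigcup_{\mathbf{h} \in \mathcal{H}} \{\cC_{\mathbf{h}}(P)\},
\]
so subadditivity of probability for $P$ uniform on $\cP_X$ gives
\[
\Pr_{P \in \cP_X}[\exists \mathbf{h} \in \mathcal{H} : \cC_{\mathbf{h}}(P)] \leq \sum_{\mathbf{h} \in \mathcal{H}} \Pr_{P \in \cP_X}[\cC_{\mathbf{h}}(P)].
\]

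Next, I bound each summand using Theorem~\ref{thm:constellation-fixed}, which gives $\Pr[\cC_{\mathbf{h}}(P)] \leq C_m \mathfrak{S}(\mathbf{h})/(\log X)^m$. The key point is that this bound must be uniform in $\mathbf{h}$. This is where I would be careful. The constant $C_m$ in Theorem~\ref{thm:ktuple-sieve} depends only on $m$. The singular series bound $\mathfrak{S}(\mathbf{h}) \ll_m (\log\log(3X))^{O_m(1)}$ is also stated uniformly over all tuples with $h_m \leq CX$. Its implied constant and exponent depend only on $m$, not on the particular $\mathbf{h}$. The threshold "$X$ large" must also not depend on $\mathbf{h}$, and it does not: it comes only from $\pi_X \gg X/\log X$, which is independent of the pattern. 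So there is a single constant $B_m$ and a single exponent $A_m$ such that
\[
\Pr_{P \in \cP_X}[\cC_{\mathbf{h}}(P)] \leq B_m \frac{(\log\log(3X))^{A_m}}{(\log X)^m}
\]
holds for every admissible $\mathbf{h}$ and all large $X$.

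Summing this uniform bound over the $|\mathcal{H}|$ patterns gives the claimed estimate $\ll_m |\mathcal{H}|(\log\log(3X))^{O_m(1)}/(\log X)^m$. The only real obstacle is the uniformity step above; the rest is bookkeeping. If one wanted to avoid relying on the stated uniform singular-series bound, I would verify it directly. The factor $\mathfrak{S}(\mathbf{h})$ is a product over primes $q$ of local densities. For $q > m+1$, the local factor at $q$ differs from $1$ by $O_m(1/q^2)$ unless $q$ divides some difference $h_i - h_j$. There are at most $\binom{m+1}{2}$ such differences, each at most $2X$, so this reduces the bound to $\prod_{i<j} (|h_i-h_j|/\vp(|h_i-h_j|))^{O_m(1)}$. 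Each factor is then controlled by the bound $h/\vp(h) \ll \log\log h$ already used in Lemma~\ref{lem:F-bound}.
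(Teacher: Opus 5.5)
Your proof is correct and matches the paper's argument: a union bound over $\mathbf{h} \in \mathcal{H}$, with each term controlled by Theorem~\ref{thm:constellation-fixed}. The paper's sketch leaves two points tacit, and your proposal makes them explicit: the bound must be uniform in $\mathbf{h}$ (constants and the threshold for large $X$ depend only on $m$), and the patterns must satisfy the restriction $h_m \leq 2X$.
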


\begin{proof}[Proof sketch]
Apply a union bound over patterns $\mathbf{h} \in \mathcal{H}$:
\[
\Pr_{P \in \cP_X}[\exists \mathbf{h} \in \mathcal{H} : \cC_{\mathbf{h}}(P)] \leq \sum_{\mathbf{h} \in \mathcal{H}} \Pr_{P \in \cP_X}[\cC_{\mathbf{h}}(P)].
\]
By Theorem~\ref{thm:constellation-fixed}, each term is $\ll_m (\log \log(3X))^{O_m(1)} / (\log X)^m$. Summing over $|\mathcal{H}|$ patterns yields the result.
\end{proof}

\subsection{Consecutive gap vectors}

Fix $t \geq 1$. For a prime $p$, define the length-$t$ consecutive gap vector
\[
\mathbf{G}^{(t)}(p) := (g(p), g(p^+), \ldots, g(p^{+(t-1)})) \in \N^t.
\]
Let $\mu_X^{(t)}$ denote the law of $\mathbf{G}^{(t)}_X := \mathbf{G}^{(t)}(P)$ for $P$ uniform in $\cP_X$.

For $\mathbf{g} = (g_1, \ldots, g_t)$ define cumulative offsets $H_0 := 0$, $H_k := g_1 + \cdots + g_k$.

\begin{theorem}[Any fixed consecutive $t$-gap vector is rare]\label{thm:consecutive-fixed}
Fix $t \geq 1$. For all $X$ large and all $\mathbf{g} = (g_1, \ldots, g_t) \in \N^t$,
\[
\mu_X^{(t)}(\mathbf{g}) \leq C_t \frac{\mathfrak{S}(0, H_1, \ldots, H_t)}{(\log X)^t} \ll_t \frac{(\log \log(3X))^{O_t(1)}}{(\log X)^t}.
\]
\end{theorem}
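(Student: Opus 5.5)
The plan is to reduce the statement to the constellation bound of Theorem~\ref{thm:constellation-fixed} with $m = t$ and offsets $\mathbf{h} = (0, H_1, \ldots, H_t)$. The key observation is a one-way implication. If $\mathbf{G}^{(t)}(p) = \mathbf{g}$, then the consecutive successors of $p$ are
\[
p^{+(k)} = p + H_k \qquad (0 \le k \le t),
\]
so all of $p, p+H_1, \ldots, p+H_t$ are prime, i.e.\ $\cC_{\mathbf{h}}(p)$ holds. This is the exact analogue of Lemma~\ref{lem:gap-pair}. Therefore
\[
\mu_X^{(t)}(\mathbf{g}) \le \Pr_{P \in \cP_X}[\cC_{\mathbf{h}}(P)].
\]
We simply discard the extra information that no primes lie strictly between consecutive offsets; it is not needed for an upper bound.

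Next I would check that $\mathbf{h}$ satisfies the hypotheses of Theorem~\ref{thm:ktuple-sieve}. Since each $g_k \ge 1$, the offsets satisfy $0 = H_0 < H_1 < \cdots < H_t$. For the size condition I split into two cases.

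\emph{Case 1: $\mathbf{g}$ never occurs for any $p \in \cP_X$.} Then $\mu_X^{(t)}(\mathbf{g}) = 0$ and the bound is trivial.

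\emph{Case 2: $\mathbf{g}$ occurs for some $p \in \cP_X$.} Iterating Bertrand's postulate as in \eqref{eq:gap-range} gives $p^{+(t)} \le 2^t p \le 2^{t+1}X$. Hence $H_t \le 2^{t+1} X$, which is of the form $h_m \le CX$ with a constant depending only on $t$.

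In Case~2, Theorem~\ref{thm:ktuple-sieve} gives
\[
A_{\mathbf{h}}(X) \le C_t\, \mathfrak{S}(\mathbf{h})\, \frac{X}{(\log X)^{t+1}}.
\]
Dividing by $\pi_X \gg X/\log X$, exactly as in the proof of Theorem~\ref{thm:constellation-fixed}, yields
\[
\mu_X^{(t)}(\mathbf{g}) \le C_t\, \frac{\mathfrak{S}(0,H_1,\ldots,H_t)}{(\log X)^t}.
\]
Here the implied constant from $\pi_X$ is absorbed into $C_t$. The second inequality of the statement then follows from the crude singular-series bound $\mathfrak{S}(\mathbf{h}) \ll_t (\log\log(3X))^{O_t(1)}$ stated in Theorem~\ref{thm:ktuple-sieve}.

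The main obstacle is making this last bound uniform in $\mathbf{g}$. It concerns offsets of size up to a constant times $X$, not merely the fixed ones. If I had to justify it rather than cite it, I would write $\mathfrak{S}(\mathbf{h}) = \prod_q (1-\nu(q)/q)(1-1/q)^{-(t+1)}$, where $\nu(q)$ is the number of residue classes mod $q$ occupied by the offsets. Only primes $q$ dividing $D = \prod_{i<j}(H_j - H_i)$ contribute factors larger than $1 + O_t(q^{-2})$, and each such factor is at most $(1-1/q)^{-O_t(1)}$. So $\mathfrak{S} \ll_t (D/\vp(D))^{O_t(1)}$. Since $D \le (2^{t+1}X)^{t^2}$, the classical bound $D/\vp(D) \ll \log\log D \ll_t \log\log(3X)$ finishes the argument, just as in Lemma~\ref{lem:F-bound}. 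One more point needs a remark: the sieve bound counts integers $n \in [X,2X]$, whereas we restrict to primes $n$. This restriction only decreases the count, so there is no difficulty.
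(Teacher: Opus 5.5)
Your proposal is correct and matches the paper's proof: the paper also notes that $\mathbf{G}^{(t)}(p)=\mathbf{g}$ forces $p+H_0,\ldots,p+H_t$ to be prime, applies Theorem~\ref{thm:ktuple-sieve} to the $(t+1)$-tuple $(H_0,\ldots,H_t)$, and divides by $\pi_X \gg X/\log X$. Your extra checks are sound and supply hypotheses the paper leaves implicit: the iterated-Bertrand bound $H_t \le 2^{t+1}X$ (with $\mu_X^{(t)}(\mathbf{g})=0$ for vectors that never occur), and the sketch $\mathfrak{S}(\mathbf{h}) \ll_t (D/\vp(D))^{O_t(1)}$ for the uniform singular-series bound.
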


\begin{proof}
If $\mathbf{G}^{(t)}(p) = \mathbf{g}$, then $p + H_0, \ldots, p + H_t$ are all prime. Apply Theorem~\ref{thm:ktuple-sieve} with the $(t+1)$-tuple $(H_0, \ldots, H_t)$ and divide by $\pi_X \gg X / \log X$.
\end{proof}

\begin{theorem}[Menu bound for consecutive $t$-gap patterns]\label{thm:menu-consecutive}
Fix $t \geq 1$. For all $X$ large and all finite sets $\mathcal{S} \subseteq \N^t$,
\[
\Pr[\mathbf{G}^{(t)}_X \in \mathcal{S}] \leq C_t \frac{|\mathcal{S}| \cdot (\log \log(3X))^{O_t(1)}}{(\log X)^t}.
\]
\end{theorem}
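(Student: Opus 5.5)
The plan is a union bound over the menu $\mathcal{S}$, combined with the single-vector estimate of Theorem~\ref{thm:consecutive-fixed}. Concretely, we write
\[
\Pr[\mathbf{G}^{(t)}_X \in \mathcal{S}] = \sum_{\mathbf{g} \in \mathcal{S}} \mu_X^{(t)}(\mathbf{g}) \leq |\mathcal{S}| \cdot \max_{\mathbf{g} \in \mathcal{S}} \mu_X^{(t)}(\mathbf{g}),
\]
which exactly mirrors the proof of Theorem~\ref{thm:menu-single}. It then suffices to bound $\mu_X^{(t)}(\mathbf{g})$ uniformly over all $\mathbf{g} \in \N^t$ by $C_t (\log\log(3X))^{O_t(1)}/(\log X)^t$, with constants depending only on $t$ and not on $\mathbf{g}$.

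The first step is to discard vectors that cannot occur. If $\mu_X^{(t)}(\mathbf{g}) > 0$, then some prime $p \in [X,2X]$ has $p^{+(t)} = p + H_t$. Applying Bertrand's postulate $t$ times gives $p^{+(t)} \leq 2^t p \leq 2^{t+1} X$, so $H_t \leq (2^{t+1}-1)X$. We may therefore restrict $\mathcal{S}$ to vectors with $0 = H_0 < H_1 < \cdots < H_t \leq C X$ for a constant $C = C(t)$; each $g_i \geq 1$ makes the offsets strictly increasing. For these vectors the hypotheses of Theorem~\ref{thm:ktuple-sieve} hold with $m = t$, so Theorem~\ref{thm:consecutive-fixed} applies. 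The remaining vectors contribute zero and only shrink the effective menu size.

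The second step is the uniformity of the singular-series bound, which I expect to be the only real obstacle. Theorem~\ref{thm:consecutive-fixed} gives $\mu_X^{(t)}(\mathbf{g}) \leq C_t \mathfrak{S}(0,H_1,\dots,H_t)/(\log X)^t$. We need $\mathfrak{S}(\mathbf{h}) \ll_t (\log\log(3X))^{O_t(1)}$ uniformly over all admissible tuples with $h_t \leq CX$. Theorem~\ref{thm:ktuple-sieve} already asserts this, so within the paper it can be cited directly. As a consistency check, I would justify it as follows. The local factor at a prime $q$ differs from $1$ by $O_t(q^{-2})$ unless $q$ divides $\Delta := \prod_{i<j}(h_j - h_i)$. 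At those primes the factor is at most $(1-1/q)^{-(t+1)} \ll (q/(q-1))^{t+1}$. This gives $\mathfrak{S} \ll_t (\Delta/\vp(\Delta))^{t+1}$. Since $\Delta \leq (CX)^{t(t+1)/2}$, the classical bound $n/\vp(n) \ll \log\log n$ yields $\log\log \Delta \ll_t \log\log(3X)$. This is the same argument as Lemma~\ref{lem:F-bound} applied to a product of differences. Inadmissible tuples have $\mathfrak{S} = 0$, and they are harmless because the bound is then trivially $0$ (or they contribute at most $O_t(1)$ primes, at the very small scales absorbed into "$X$ large").

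Finally, substituting the uniform bound into the union bound gives
\[
\Pr[\mathbf{G}^{(t)}_X \in \mathcal{S}] \leq |\mathcal{S}| \cdot C_t \frac{(\log\log(3X))^{O_t(1)}}{(\log X)^t},
\]
which is the claimed estimate.
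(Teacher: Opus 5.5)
Your proposal is correct and matches the paper's proof: a union bound $\Pr[\mathbf{G}^{(t)}_X \in \mathcal{S}] \leq |\mathcal{S}| \max_{\mathbf{g} \in \mathcal{S}} \mu_X^{(t)}(\mathbf{g})$, followed by the single-vector bound of Theorem~\ref{thm:consecutive-fixed}. Your extra checks are sound and make explicit the uniformity in $\mathbf{g}$ that the paper's sketch assumes: restricting to $H_t \ll_t X$ via iterated Bertrand, and showing $\mathfrak{S}(\mathbf{h}) \ll_t (\Delta/\vp(\Delta))^{t+1} \ll_t (\log\log(3X))^{t+1}$ uniformly in the tuple.
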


\begin{proof}[Proof sketch]
By the union bound:
\[
\Pr[\mathbf{G}^{(t)}_X \in \mathcal{S}] = \sum_{\mathbf{g} \in \mathcal{S}} \mu_X^{(t)}(\mathbf{g}) \leq |\mathcal{S}| \cdot \max_{\mathbf{g} \in \mathcal{S}} \mu_X^{(t)}(\mathbf{g}).
\]
By Theorem~\ref{thm:consecutive-fixed}, the maximum is $\ll_t (\log \log(3X))^{O_t(1)} / (\log X)^t$, yielding the result.
\end{proof}

\begin{corollary}[Collision probability for consecutive patterns]\label{cor:collision-consecutive}
Fix $t \geq 1$ and define
\[
w_X^{(t)} := \sum_{\mathbf{g} \in \N^t} \mu_X^{(t)}(\mathbf{g})^2.
\]
Then for $X$ large,
\[
w_X^{(t)} \ll_t \frac{(\log \log(3X))^{O_t(1)}}{(\log X)^t}, \qquad \ell_X^{(t)} := 1 - w_X^{(t)} \geq 1 - \frac{(\log \log(3X))^{O_t(1)}}{(\log X)^t}.
\]
\end{corollary}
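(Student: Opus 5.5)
The plan mirrors the proof of Theorem~\ref{thm:collision-single}, with the gap law $\mu_X$ replaced by the law $\mu_X^{(t)}$ of the consecutive gap vector. The first step is the elementary inequality
\[
w_X^{(t)} = \sum_{\mathbf{g} \in \N^t} \mu_X^{(t)}(\mathbf{g})^2 \leq \Bigl(\max_{\mathbf{g}} \mu_X^{(t)}(\mathbf{g})\Bigr) \sum_{\mathbf{g}} \mu_X^{(t)}(\mathbf{g}) = \max_{\mathbf{g}} \mu_X^{(t)}(\mathbf{g}).
\]
It holds because $\mu_X^{(t)}$ is a probability mass function on $\N^t$: its values are nonnegative and sum to $1$. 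This reduces the collision bound to a uniform bound on single atoms.

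The second step is to bound that maximum using Theorem~\ref{thm:consecutive-fixed}, which gives $\mu_X^{(t)}(\mathbf{g}) \ll_t (\log\log(3X))^{O_t(1)}/(\log X)^t$ for every $\mathbf{g}$. The only point to check is that this bound is uniform over all $\mathbf{g}$ in the support. It suffices to verify that the cumulative offsets $H_t = g_1 + \cdots + g_t$ of any realized gap vector satisfy the size hypothesis $H_t \leq CX$ of Theorem~\ref{thm:ktuple-sieve}.

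This size check is the main (mild) obstacle. Iterating Bertrand's postulate gives $p^{+(t)} \leq 2^t p \leq 2^{t+1} X$. Hence $H_t \leq (2^{t+1}-1)X$, a constant multiple of $X$ depending only on $t$. The constant $C$ in Theorem~\ref{thm:ktuple-sieve} may therefore be taken to depend on $t$, with $C_t$ and the implied exponent adjusted accordingly. The singular series bound $\mathfrak{S}(\mathbf{h}) \ll_t (\log\log(3X))^{O_t(1)}$ then holds uniformly over all such tuples. Atoms $\mathbf{g}$ outside the support have $\mu_X^{(t)}(\mathbf{g}) = 0$ and contribute nothing, so the maximum is controlled by the stated bound.

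Finally, the statement about $\ell_X^{(t)} = 1 - w_X^{(t)}$ follows immediately by subtracting. As an optional remark, one may also note $H(\mathbf{G}^{(t)}_X) \geq H_2 = -\logtwo w_X^{(t)} \geq t\logtwo\log X - O_t(\logtwo\log\log X)$, as in Corollary~\ref{cor:entropy-single}.
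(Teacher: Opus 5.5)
Your proposal is correct and follows the paper's proof essentially verbatim: you bound $w_X^{(t)}$ by $\max_{\mathbf{g}} \mu_X^{(t)}(\mathbf{g})$ using $\sum_{\mathbf{g}} \mu_X^{(t)}(\mathbf{g}) = 1$, invoke Theorem~\ref{thm:consecutive-fixed} for that maximum, and subtract to get the bound on $\ell_X^{(t)}$. Your additional check that realized offsets satisfy $H_t \le (2^{t+1}-1)X$ via iterated Bertrand (Lemma~\ref{lem:bertrand}) makes explicit the range hypothesis of Theorem~\ref{thm:ktuple-sieve}, which the paper leaves implicit.
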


\begin{proof}[Proof sketch]
As in the single-gap case (Theorem~\ref{thm:collision-single}), we use:
\[
w_X^{(t)} = \sum_{\mathbf{g}} \mu_X^{(t)}(\mathbf{g})^2 \leq \left(\max_{\mathbf{g}} \mu_X^{(t)}(\mathbf{g})\right) \cdot \sum_{\mathbf{g}} \mu_X^{(t)}(\mathbf{g}) = \max_{\mathbf{g}} \mu_X^{(t)}(\mathbf{g}),
\]
since $\sum_{\mathbf{g}} \mu_X^{(t)}(\mathbf{g}) = 1$. Theorem~\ref{thm:consecutive-fixed} bounds the maximum, and the logical entropy bound follows immediately.
\end{proof}

% ============================================================
\section{Compositionality and Amplification}
\label{sec:compositionality}
% ============================================================

One conceptual advantage of the logical-entropy viewpoint is that collision probabilities tensorize under independent sampling.

\subsection{Tensorization of collisions}

\begin{proposition}[Tensorization for collision probabilities]\label{prop:tensorize}
Let $Z$ be a discrete random variable with law $\nu$, and define $w(\nu) = \sum_z \nu(z)^2$. Let $Z^{(1)}, \ldots, Z^{(m)}$ be i.i.d.\ copies of $Z$. Then
\[
\Pr[(Z^{(1)}, \ldots, Z^{(m)}) = (\widetilde{Z}^{(1)}, \ldots, \widetilde{Z}^{(m)})] = (w(\nu))^m.
\]
\end{proposition}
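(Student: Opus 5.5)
The identity is a direct computation, once the setup is made precise. I read the statement as follows. The tuple $(\widetilde{Z}^{(1)}, \ldots, \widetilde{Z}^{(m)})$ is an independent copy of $(Z^{(1)}, \ldots, Z^{(m)})$. So all $2m$ variables $Z^{(1)},\dots,Z^{(m)},\widetilde{Z}^{(1)},\dots,\widetilde{Z}^{(m)}$ are i.i.d.\ with law $\nu$. I will state this interpretation explicitly at the start, since the statement leaves it implicit. The plan is to factor the event coordinatewise and then use the single-coordinate identity $\Pr[Z=\widetilde Z]=w(\nu)$.

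\textbf{Step 1 (single coordinate).} For independent $Z,\widetilde Z\sim\nu$, partition on the common value:
\[
\Pr[Z=\widetilde Z]=\sum_z \Pr[Z=z,\widetilde Z=z]=\sum_z \nu(z)^2=w(\nu).
\]
This uses countable additivity over the countable support of $\nu$, together with independence.

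\textbf{Step 2 (factorization).} The event $\{(Z^{(1)},\dots,Z^{(m)})=(\widetilde Z^{(1)},\dots,\widetilde Z^{(m)})\}$ is the intersection over $i$ of the events $E_i=\{Z^{(i)}=\widetilde Z^{(i)}\}$. Each $E_i$ is measurable with respect to the pair $(Z^{(i)},\widetilde Z^{(i)})$, and these $m$ pairs are mutually independent. Hence the events $E_i$ are independent, and
\[
\Pr\Bigl[\bigcap_{i=1}^m E_i\Bigr]=\prod_{i=1}^m \Pr[E_i]=w(\nu)^m.
\]

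As a cross-check, I could instead compute directly. The law of the tuple is the product measure $\nu^{\otimes m}$, and
\[
w(\nu^{\otimes m})=\sum_{z_1,\dots,z_m}\prod_i \nu(z_i)^2=\prod_i\sum_{z}\nu(z)^2,
\]
where exchanging sum and product is justified because all terms are nonnegative (Tonelli).

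There is no real obstacle. The only point needing care is Step 2: the event-level independence must be deduced from independence of the pairs, or equivalently the nonnegative product sum must be rearranged legitimately. Both are routine.
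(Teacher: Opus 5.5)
Your proposal is correct and follows essentially the same route as the paper: treat the tilde tuple as an independent copy, factor the coincidence event coordinatewise, and use $\Pr[Z^{(i)}=\widetilde{Z}^{(i)}]=w(\nu)$ for each coordinate. Deducing independence of the events $E_i$ from independence of the pairs, together with the product-measure cross-check, spells out what the paper's sketch leaves implicit.
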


\begin{proof}[Proof sketch]
Let $(\widetilde{Z}^{(1)}, \ldots, \widetilde{Z}^{(m)})$ be an independent copy of the $m$-tuple. The two tuples coincide if and only if each coordinate coincides: $Z^{(i)} = \widetilde{Z}^{(i)}$ for all $i$. By independence of coordinates and independence of the two copies:
\[
\Pr[(Z^{(1)}, \ldots, Z^{(m)}) = (\widetilde{Z}^{(1)}, \ldots, \widetilde{Z}^{(m)})] = \prod_{i=1}^m \Pr[Z^{(i)} = \widetilde{Z}^{(i)}] = \prod_{i=1}^m w(\nu) = (w(\nu))^m.
\]
\end{proof}

\begin{corollary}[Amplification for prime gaps]\label{cor:amplify}
Let $G_X^{(1)}, \ldots, G_X^{(m)}$ be i.i.d.\ draws from $\mu_X$. Then
\[
\Pr[(G_X^{(1)}, \ldots, G_X^{(m)}) = (\widetilde{G}_X^{(1)}, \ldots, \widetilde{G}_X^{(m)})] = (w_X)^m \ll \left(\frac{\log \log(3X)}{\log X}\right)^m.
\]
\end{corollary}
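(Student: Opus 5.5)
The plan is to derive Corollary~\ref{cor:amplify} directly from Proposition~\ref{prop:tensorize} and Theorem~\ref{thm:collision-single}. Take $Z = G_X$ with law $\nu = \mu_X$, so that $w(\nu) = \sum_h \mu_X(h)^2 = w_X$ by Definition~\ref{def:psi-w-le}. The tuple $(G_X^{(1)}, \ldots, G_X^{(m)})$ consists of i.i.d.\ copies of $Z$. We interpret $(\widetilde{G}_X^{(1)}, \ldots, \widetilde{G}_X^{(m)})$ as an independent copy of this tuple, exactly as in the statement of Proposition~\ref{prop:tensorize}.

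First I will check that the joint law is the product law $\mu_X^{\otimes m}$ on both copies, with the two copies independent. This is the only hypothesis Proposition~\ref{prop:tensorize} needs. It then gives the equality
\[
\Pr\bigl[(G_X^{(i)})_{i} = (\widetilde{G}_X^{(i)})_{i}\bigr] = w_X^m .
\]
Equivalently, the collision probability of the product measure factorizes as $\sum_{\mathbf{h}} \prod_i \mu_X(h_i)^2 = \bigl(\sum_h \mu_X(h)^2\bigr)^m$. I will write this factorization out once as a sanity check, since it is just the expansion of the product over the $m$ coordinates.

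Next, for the inequality I invoke Theorem~\ref{thm:collision-single}, which gives $0 \le w_X \le C \log\log(3X)/\log X$ for all $X \ge 3$. The map $t \mapsto t^m$ is monotone on $[0,\infty)$, so raising to the $m$-th power yields
\[
w_X^m \le C^m \left(\frac{\log\log(3X)}{\log X}\right)^m .
\]

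The only delicate point is the implied constant. It is $C^m$, so the $\ll$ holds for each fixed $m$, uniformly in $X \ge 3$. If uniformity in $m$ were wanted, one would simply keep $C^m$ explicit; the bound is still useful as long as $C \log\log(3X)/\log X < 1$, which is true for $X$ large. I do not expect any real obstacle beyond stating this dependence of the constant on $m$ clearly.
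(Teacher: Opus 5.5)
Your proposal is correct and follows the paper's proof exactly: you apply Proposition~\ref{prop:tensorize} with $\nu=\mu_X$ to obtain the identity with $(w_X)^m$, then raise the bound of Theorem~\ref{thm:collision-single} to the $m$-th power. Your remark that the implied constant is $C^m$ is a correct clarification the paper leaves implicit: the $\ll$ holds for fixed $m$, and holds uniformly in $m$ if the bound is written as $\bigl(C\log\log(3X)/\log X\bigr)^m$.
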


\begin{proof}[Proof sketch]
Apply Proposition~\ref{prop:tensorize} with $\nu = \mu_X$ and $w(\nu) = w_X$. The bound $(w_X)^m \ll ((\log \log(3X))/(\log X))^m$ follows from Theorem~\ref{thm:collision-single}.
\end{proof}

\subsection{Menus fail exponentially under repetition}

\begin{proposition}[Menus fail exponentially]\label{prop:menu-repeat}
Let $Z$ be a discrete random variable and let $S$ be any set with $\Pr[Z \in S] \leq \alpha$. If $Z^{(1)}, \ldots, Z^{(m)}$ are i.i.d., then
\[
\Pr[Z^{(1)} \in S, \ldots, Z^{(m)} \in S] \leq \alpha^m.
\]
\end{proposition}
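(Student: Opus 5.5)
The plan is to prove Proposition~\ref{prop:menu-repeat} by independence alone, in the same way as the tensorization argument in Proposition~\ref{prop:tensorize}; no number-theoretic input is needed. For each $i$, let $E_i$ be the event $\{Z^{(i)} \in S\}$. The event in the statement is the intersection $E_1 \cap \cdots \cap E_m$. Since $S$ is a fixed set, each $E_i$ is determined by $Z^{(i)}$ alone, that is, $E_i = (Z^{(i)})^{-1}(S)$. Mutual independence of the $Z^{(i)}$ therefore passes to the events $E_i$, and we get
\[
\Pr[Z^{(1)} \in S, \ldots, Z^{(m)} \in S] = \prod_{i=1}^m \Pr[Z^{(i)} \in S].
\]

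Next we use that the copies are identically distributed. This gives $\Pr[Z^{(i)} \in S] = \Pr[Z \in S] \le \alpha$ for every $i$. Each factor in the product lies in $[0,\alpha]$, so the product is at most $\alpha^m$, which is the claim. If $S$ is infinite or $Z$ takes infinitely many values, nothing changes: we only need $S$ to be measurable, and since $Z$ is discrete every subset is measurable.

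The only point that needs care is the first step, namely that independence of the random variables implies independence of the events $\{Z^{(i)} \in S\}$. This is exactly the definition of mutual independence applied to the preimage sets, so it is not a real obstacle. Afterwards I would note the application to prime gaps: take $Z = G_X$ and $S$ a menu with $|S| \le (\log X)^c$. Theorem~\ref{thm:menu-single} lets us take $\alpha \ll (\log X)^{c-1}\log\log(3X)$, so the probability that all $m$ independent samples fall in the menu is at most $\alpha^m$, which decays like $\bigl((\log X)^{c-1}\log\log(3X)\bigr)^m$.
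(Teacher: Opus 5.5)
Your proof is correct and takes the same route as the paper: use independence of the events $\{Z^{(i)} \in S\}$ to factor the joint probability, use identical distribution to get $(\Pr[Z \in S])^m$, and bound this by $\alpha^m$. Your closing application to prime gaps via Theorem~\ref{thm:menu-single} also matches the remark the paper attaches to its proof.
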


\begin{proof}[Proof sketch]
By independence, the probability that all $m$ samples fall in $S$ factors:
\[
\Pr[Z^{(1)} \in S, \ldots, Z^{(m)} \in S] = \prod_{i=1}^m \Pr[Z^{(i)} \in S] = (\Pr[Z \in S])^m \leq \alpha^m.
\]
Combined with Theorem~\ref{thm:menu-single} or Theorem~\ref{thm:menu-consecutive}, this shows that even if a ``menu'' $S$ has small but nonzero success probability on one sample, repeating across independent primes drives success probability down exponentially.
\end{proof}

% ============================================================
\section{Sieve-Theoretic Framework: Local to Selberg Weakness}
\label{sec:sieve-framework}
% ============================================================

We now develop a sieve-native weakness notion that makes the resulting theorems as clean as PSI-W. The idea is to define weakness directly in terms of local congruence obstructions, which are exactly what sieves control.

This framework can be understood as an instantiation of quantale weakness (Section~\ref{subsec:quantale-weakness}) in a multiplicative setting. The local weakness $\WeakLoc$ is an Euler product, which corresponds to quantale weakness in a product quantale where the monoidal operation $\otimes$ is ordinary multiplication and the valuation encodes local density factors $(1 - \omega(p)/(p-1))$ at each prime $p$. The Selberg weakness $\WeakSel$ then arises as an optimized upper bound on this product-quantale weakness, computed via the Selberg sieve's quadratic form.

\subsection{Local obstruction patterns}

Fix parameters: a large scale $X \geq 3$, a wheel modulus $W_0 \geq 1$ and a residue $b \pmod{W_0}$ with $(b, W_0) = 1$, and a sieve cutoff $z \geq 2$.

\begin{definition}[Local obstruction pattern]\label{def:pattern}
A \emph{local obstruction pattern up to $z$} is a choice, for each prime $p \leq z$ with $p \nmid W_0$, of a set of forbidden \emph{reduced} residue classes
\[
\Omega(p) \subseteq (\Z/p\Z)^\times.
\]
Write $\omega(p) := |\Omega(p)|$, with $0 \leq \omega(p) \leq p - 1$.
\end{definition}

Given $\Omega$, define the sifted set:
\[
\cS_\Omega(X; W_0, b, z) := \{n \in [X, 2X] \cap \Z : n \equiv b \pmod{W_0}, \; n \bmod p \notin \Omega(p) \; \forall p \leq z, p \nmid W_0\}.
\]

\subsection{Local weakness}

\begin{definition}[Local weakness]\label{def:weakloc}
The \emph{local weakness} of an obstruction pattern $\Omega$ up to sieve level $z$ is defined by
\[
\WeakLoc(\Omega; z) := \prod_{\substack{p \leq z \\ p \nmid W_0}} \left(1 - \frac{\omega(p)}{p-1}\right).
\]
\end{definition}

This is an instance of quantale weakness in the following sense. Consider the multiplicative quantale $((0,1], \times, \max, 1, \leq)$. For each prime $p \leq z$, define the local valuation $\mu(p) := 1 - \omega(p)/(p-1)$, representing the fraction of reduced residue classes that survive sieving at $p$. The local weakness is then
\[
\WeakLoc(\Omega; z) = \bigotimes_{p \leq z, \, p \nmid W_0} \mu(p) = \prod_{p} \mu(p),
\]
which measures the ``total survival fraction'' across all local obstructions. High local weakness means many residue classes survive (the sieve is weak/permissive); low local weakness means few survive (the sieve is strong/restrictive).

\begin{proposition}[Multiplicativity for disjoint ranges]\label{prop:mult}
Let $\Omega_1, \Omega_2$ be obstruction patterns supported on disjoint prime sets. Let $\Omega = \Omega_1 \cup \Omega_2$. Then
\[
\WeakLoc(\Omega; z) = \WeakLoc(\Omega_1; z) \cdot \WeakLoc(\Omega_2; z).
\]
\end{proposition}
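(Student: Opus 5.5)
The plan is to unwind Definition~\ref{def:weakloc} and regroup the Euler product according to the supports of $\Omega_1$ and $\Omega_2$. First I will make precise what ``supported on disjoint prime sets'' means. There are disjoint sets $\mathcal{Q}_1, \mathcal{Q}_2$ of primes $p \leq z$ with $p \nmid W_0$ such that $\Omega_i(p) = \emptyset$ for every $p \notin \mathcal{Q}_i$. The union pattern is then defined primewise by $\Omega(p) := \Omega_1(p) \cup \Omega_2(p)$. For each prime this union is a disjoint union in which at most one of the two sets is nonempty. Consequently $\omega(p) = \omega_1(p)$ for $p \in \mathcal{Q}_1$, $\omega(p) = \omega_2(p)$ for $p \in \mathcal{Q}_2$, and $\omega(p) = 0$ otherwise.

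Next I will split the product over all primes $p \leq z$ with $p \nmid W_0$ into three ranges: $p \in \mathcal{Q}_1$, $p \in \mathcal{Q}_2$, and the remaining primes. On the remaining primes $\omega(p) = 0$, so each local factor $1 - \omega(p)/(p-1)$ equals $1$ and contributes nothing. The same observation, applied to $\Omega_1$ alone, shows that
\[
\WeakLoc(\Omega_1; z) = \prod_{p \in \mathcal{Q}_1} \Bigl(1 - \frac{\omega_1(p)}{p-1}\Bigr).
\]
It also gives the analogous formula for $\Omega_2$, since the factors of $\Omega_1$ off $\mathcal{Q}_1$ are all $1$. Substituting the identification of $\omega(p)$ from the first step into the split product, the $\mathcal{Q}_1$ part becomes $\WeakLoc(\Omega_1; z)$ and the $\mathcal{Q}_2$ part becomes $\WeakLoc(\Omega_2; z)$. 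This gives the claim. All products are finite, so commutativity and associativity of multiplication justify the regrouping; this is the $\otimes$-compositionality of the multiplicative quantale noted after Definition~\ref{def:weakloc}.

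The only real obstacle is definitional: making sure the ``union'' of patterns is interpreted primewise, and that an unsupported prime is read as $\omega(p) = 0$ rather than as excluded from the product. Once that convention is fixed, the identity is a one-line regrouping. I would also remark that no local factor vanishes unexpectedly. If $\omega_i(p) = p-1$ for some $p$, both sides are $0$, so the identity still holds.
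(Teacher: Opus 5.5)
Your proposal is correct and follows essentially the same route as the paper's proof: identify $\omega(p)$ primewise on the disjoint supports, with $\omega(p)=0$ elsewhere, and regroup the finite Euler product. You also make explicit a step the paper leaves implicit, namely that each $\WeakLoc(\Omega_i; z)$ itself reduces to the product over the support of $\Omega_i$ because all off-support factors equal $1$.
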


\begin{proof}[Proof sketch]
Let $P_1 = \{p \leq z : \Omega_1(p) \neq \emptyset\}$ and $P_2 = \{p \leq z : \Omega_2(p) \neq \emptyset\}$ be the supporting prime sets, assumed disjoint. For the combined pattern $\Omega = \Omega_1 \cup \Omega_2$, we have $\omega(p) = \omega_1(p)$ if $p \in P_1$, $\omega(p) = \omega_2(p)$ if $p \in P_2$, and $\omega(p) = 0$ otherwise. Thus the Euler product factors:
\[
\WeakLoc(\Omega; z) = \prod_{\substack{p \leq z \\ p \nmid W_0}} \left(1 - \frac{\omega(p)}{p-1}\right) = \prod_{p \in P_1} \left(1 - \frac{\omega_1(p)}{p-1}\right) \cdot \prod_{p \in P_2} \left(1 - \frac{\omega_2(p)}{p-1}\right),
\]
which equals $\WeakLoc(\Omega_1; z) \cdot \WeakLoc(\Omega_2; z)$.
\end{proof}

This multiplicativity is a direct consequence of the quantale structure: weakness in product quantales decomposes over independent factors.

\subsection{Selberg weakness}

For squarefree $d$ with prime factors $\leq z$ and $(d, W_0) = 1$, define $\omega(d) := \prod_{p \mid d} \omega(p)$ and
\[
g(d) := \frac{\omega(d)}{\vp(d)}.
\]

\begin{definition}[Selberg weakness]\label{def:weaksel}
The \emph{Selberg quadratic form} for coefficients $\lambda = \{\lambda_d\}$ supported on squarefree $d \leq R$ is
\[
\cJ(\lambda) := \sum_{d_1} \sum_{d_2} \lambda_{d_1} \lambda_{d_2} g([d_1, d_2]).
\]
The \emph{Selberg weakness} is the minimum of this quadratic form subject to normalization:
\[
\WeakSel(\Omega; z, R) := \inf_{\lambda : \lambda_1 = 1} \cJ(\lambda).
\]
\end{definition}

The Selberg weakness can be understood as an optimized version of local weakness: it represents the tightest upper bound on the sifted count achievable by Selberg's method. In the quantale framework, $\WeakSel$ arises from optimizing over all ``weightings'' $\lambda$ of the local factors, finding the combination that yields the smallest provable bound while respecting the constraint $\lambda_1 = 1$ (which ensures the sieve upper-bounds the characteristic function of the sifted set).

\begin{proposition}[Local $\Rightarrow$ Selberg]\label{prop:local-to-selberg}
Assume $\omega(p) \leq \kappa$ for all $p \leq z$. Let $s := (\log R)/(\log z)$. Then for $s > 1$:
\[
\WeakSel(\Omega; z, R) \leq (F_\kappa(s) + o_{s,\kappa}(1)) \WeakLoc(\Omega; z),
\]
where $F_\kappa(s)$ is the standard upper-bound sieve function of dimension $\kappa$.
\end{proposition}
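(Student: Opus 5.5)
The plan is to evaluate $\WeakSel$ exactly by diagonalizing the Selberg quadratic form, and then to compare the resulting lower-bound sum with the Euler product $\WeakLoc$ using standard sieve asymptotics of dimension $\kappa$.

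\textbf{Step 1 (diagonalization).} The function $g(d)=\omega(d)/\vp(d)$ is multiplicative on squarefree $d$ with $0\le g(p)<1$ whenever $\omega(p)\le p-2$. The case $\omega(p)=p-1$ needs separate treatment. There $g(p)=1$, $\WeakLoc=0$, and one shows $\WeakSel=0$ by taking $\lambda_p=-1$, $\lambda_1=1$, so I will assume $g(p)<1$ throughout. Define $h$ multiplicatively by $h(p)=g(p)/(1-g(p))$. The identity $g([d_1,d_2])=g(d_1)g(d_2)/g((d_1,d_2))$, combined with $1/g(m)=\sum_{e\mid m}1/h(e)$, gives
\[
\cJ(\lambda)=\sum_{e}\frac{1}{h(e)}\,y_e^2,\qquad y_e:=h(e)\sum_{e\mid d}\frac{\lambda_d}{g(d)}\,g(d)\cdots
\]
More precisely, $y_e=\sum_{d:\,e\mid d}\lambda_d g(d)$, suitably normalized. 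The condition $\lambda_1=1$ becomes the linear constraint $\sum_e \mu(e)y_e=1$ by M\"obius inversion. Cauchy--Schwarz then yields the classical formula
\[
\WeakSel(\Omega;z,R)=\frac{1}{G(R)},\qquad G(R):=\sum_{\substack{d\le R,\ d\mid P(z)\\ (d,W_0)=1}} h(d),
\]
where $P(z)$ is the product of primes $\le z$. Because $\lambda$ is supported on $d\le R$, the optimal $y_e$ is supported there too, so the truncation at $R$ is consistent.

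\textbf{Step 2 (comparison with the Euler product).} Summing over all squarefree $d\mid P(z)$ gives
\[
\sum_{d\mid P(z)}h(d)=\prod_{p\le z}\bigl(1+h(p)\bigr)=\prod_{p\le z}\bigl(1-g(p)\bigr)^{-1}=\WeakLoc(\Omega;z)^{-1},
\]
since $1-g(p)=1-\omega(p)/(p-1)$. Hence
\[
\WeakSel\cdot\WeakLoc^{-1}=\frac{\sum_{d\mid P(z)}h(d)}{\sum_{d\mid P(z),\,d\le R}h(d)},
\]
and the task reduces to showing that this ratio is at most $F_\kappa(s)+o(1)$.

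\textbf{Step 3 (main obstacle).} This requires a lower bound for the truncated sum $G(R)$ relative to the full product. The hypothesis $\omega(p)\le\kappa$ gives the one-sided dimension condition
\[
\prod_{w\le p<z}\bigl(1-g(p)\bigr)^{-1}\le\Bigl(\frac{\log z}{\log w}\Bigr)^{\kappa}\Bigl(1+O\Bigl(\frac{1}{\log w}\Bigr)\Bigr),
\]
which follows from Mertens' theorem. Under this condition, the standard analysis of $G(R)$ (Halberstam--Richert, Ch.~4) applies. One uses the Rankin/Buchstab decomposition to write $G(R)/G(\infty)=\sigma_\kappa(s)+o(1)$, or at least $\ge$ it, where $\sigma_\kappa$ is the continuous solution of the associated delay-differential equation. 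The Selberg upper-bound function is by definition $F_\kappa(s)=1/\sigma_\kappa(s)$ in this normalization (valid for $s>1$ in the range considered). The delicate points are these:
\begin{enumerate}[leftmargin=2em]
\item Only an upper dimension bound is available, so the argument must be monotone. A smaller $g(p)$ can only increase the ratio $G(R)/G(\infty)$, and I would try to prove this comparison by a direct Rankin-type inequality $\sum_{d>R}h(d)\le R^{-\epsilon}\prod(1+h(p)p^{\epsilon})$ combined with an iteration.
\item The error must be uniform in $\Omega$, depending only on $s$ and $\kappa$.
\end{enumerate}
In the worst case I would settle for quoting the theorem from Halberstam--Richert and checking that its hypotheses follow from $\omega(p)\le\kappa$.
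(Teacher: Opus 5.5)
Your proposal is correct and takes the same route as the paper's sketch: evaluate the Selberg minimum as $1/G(R)$, then compare it with the Euler product using the dimension-$\kappa$ estimates of Halberstam--Richert. It is in fact more accurate than that sketch. The diagonal weight is $h(d)=\prod_{p\mid d}g(p)/(1-g(p))$ rather than the $\mu(d)^2/g(d)$ written there. Your identity $\WeakLoc(\Omega;z)^{-1}=\sum_{d\mid P(z)}h(d)$ is exact. And under the one-sided hypothesis $\omega(p)\le\kappa$ one gets only an inequality, not the asymptotic the paper asserts.

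The monotonicity you leave open in Step~3 needs no Rankin iteration. Note that $G(R)\,\WeakLoc(\Omega;z)=\Pr\bigl[\prod_{p} p^{\xi_p}\le R\bigr]$ with independent $\xi_p\sim\mathrm{Bernoulli}(g(p))$. Coupling with the model $g^*(p)=\min\{\kappa,p-2\}/(p-1)$ on all $p\le z$, which is independent of $\Omega$ and $W_0$, reduces everything to the two-sided Halberstam--Richert asymptotic. Its limit is $\sigma_\kappa(2s)$, since support $R$ means level $R^2$, and $\sigma_\kappa(2s)\ge\sigma_\kappa(s)$, so this yields your target uniformly in $\Omega$.
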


\begin{proof}[Proof sketch]
This is a standard result from sieve theory (see \cite{HR, IK}). The key idea is that the multiplicative function $g(d) = \omega(d)/\vp(d)$ satisfies the ``dimension $\kappa$'' condition: for primes $p$, we have $g(p) = \omega(p)/(p-1) \leq \kappa/(p-1)$. 

The Selberg sieve constructs explicit coefficients $\lambda_d$ (supported on squarefree $d \leq R$) that minimize $\cJ(\lambda)$ subject to $\lambda_1 = 1$. The optimum satisfies
\[
\WeakSel(\Omega; z, R) = \cJ(\lambda^*) \leq \frac{1}{\sum_{d \leq R} \mu(d)^2 / g(d)} \cdot (1 + o(1)),
\]
and the denominator is asymptotic to $\WeakLoc(\Omega; z)^{-1} / F_\kappa(s)$ by standard estimates on sums of multiplicative functions. Rearranging gives the claimed bound.
\end{proof}

\subsection{Prime sieve theorem}

\begin{definition}[Level-of-distribution hypothesis]\label{def:LD}
Fix $\theta \in (0, 1]$. We say \emph{$\Lam$ has level of distribution $\theta$} if for every $A > 0$ there exists $B = B(A)$ such that for all $Q \leq X^\theta / (\log X)^B$:
\[
\sum_{q \leq Q} \max_{(a,q)=1} \left|\sum_{\substack{X < n \leq 2X \\ n \equiv a \pmod{q}}} \Lam(n) - \frac{X}{\vp(q)}\right| \ll_A \frac{X}{(\log X)^A}.
\]
\end{definition}

Bombieri--Vinogradov gives this with $\theta = 1/2$ unconditionally.

\begin{theorem}[Crisp prime upper bound in terms of local weakness]\label{thm:prime-sieve-loc}
Assume Definition~\ref{def:LD} holds at level $\theta$. Assume $\omega(p) \leq \kappa$ for all $p \leq z$. Let $R = z^s$ with $s > 1$ and suppose $W_0 R^2 \leq X^\theta / (\log X)^B$ for sufficiently large $B$. Then
\[
\pi_\Omega(X; W_0, b, z) \leq (F_\kappa(s) + o(1)) \WeakLoc(\Omega; z) \cdot \frac{X}{\vp(W_0) \log X} + O_A\left(\frac{X}{(\log X)^{A+1}}\right).
\]
\end{theorem}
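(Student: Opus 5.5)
We prove Theorem~\ref{thm:prime-sieve-loc} with the standard Selberg upper-bound sieve, applied to the weighted sequence $\Lam(n)$ restricted to $n \equiv b \pmod{W_0}$.

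\textbf{Majorant.} Here $\pi_\Omega(X;W_0,b,z)$ counts primes in $\cS_\Omega(X;W_0,b,z)$. Take the optimal coefficients $\lambda^*$ from Definition~\ref{def:weaksel}, supported on squarefree $d \le R$ with $(d,W_0)=1$ and prime factors $\le z$. For each $p$ in the sieving range, fix a squarefree product $m(n)$ over primes $p\le z$, $p\nmid W_0$, recording which primes ``hit'' $n$, namely those with $n \bmod p \in \Omega(p)$. Then use the majorant
\[
\1_{\cS_\Omega}(n) \le \Bigl(\sum_{d \mid m(n)} \lambda_d\Bigr)^2,
\]
which holds because $\lambda_1=1$ and the left side vanishes unless $m(n)=1$. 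Multiply by $\Lam(n)$ and sum over $n\in[X,2X]$ with $n\equiv b \pmod{W_0}$. Prime powers contribute $O(\sqrt X \log X)$, and Chebyshev weighting converts the sum to the count, with the factor $1/\log X$ appearing via partial summation since $\log n = \log X + O(1)$.

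\textbf{Expanding the square.} Expanding gives
\[
\sum_{d_1,d_2}\lambda_{d_1}\lambda_{d_2}\sum_{\substack{n\equiv b\ (W_0)\\ n \bmod p\in\Omega(p)\ \forall p\mid [d_1,d_2]}}\Lam(n).
\]
For $e=[d_1,d_2]$, the condition on $n$ splits by CRT into $\omega(e)$ reduced residue classes modulo $W_0 e$. Each class carries main term $X/\vp(W_0 e)=X/(\vp(W_0)\vp(e))$. The total main term is therefore
\[
\frac{X}{\vp(W_0)}\sum_{d_1,d_2}\lambda_{d_1}\lambda_{d_2}\, g([d_1,d_2]) = \frac{X}{\vp(W_0)}\,\cJ(\lambda^*) = \frac{X}{\vp(W_0)}\WeakSel(\Omega;z,R).
\]
Proposition~\ref{prop:local-to-selberg} then yields the factor $(F_\kappa(s)+o(1))\WeakLoc(\Omega;z)$.

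\textbf{Error term.} The remainder is
\[
\sum_{d_1,d_2}|\lambda_{d_1}\lambda_{d_2}|\,\omega([d_1,d_2])\,E(X;W_0[d_1,d_2]),
\]
where $E(X;q)$ denotes the maximal discrepancy in Definition~\ref{def:LD}. This is the main obstacle, and we handle it as follows.
\begin{itemize}
\item Use the classical Selberg bound $|\lambda_d|\le 1$.
\item Regroup by $q=W_0 e$ with $e\le R^2$. Each $q$ arises from at most $\tau_3(e)$ pairs and carries weight $\omega(e)\le\kappa^{\omega(e)}$, giving a total multiplicity $h(q)\ll \tau_{3\kappa}(q)$.
\item Apply Cauchy--Schwarz, using the trivial bound $E(X;q)\ll X\log X/\vp(q)$ in one factor. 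Then $\sum_q h(q)^2 X/\vp(q) \ll X(\log X)^{C_\kappa}$, while the other factor $\sum_{q\le W_0R^2} E(X;q)$ is controlled by Definition~\ref{def:LD}, valid since $W_0R^2\le X^\theta/(\log X)^B$.
\item Choose $A'$ in the level-of-distribution hypothesis large in terms of $A$ and $\kappa$, and take $B=B(A')$. This gives the total error $O_A(X/(\log X)^{A})$ for the $\Lam$-sum, hence $O_A(X/(\log X)^{A+1})$ after the partial-summation step.
\end{itemize}
The care needed is in this divisor-function bookkeeping, and in checking that only moduli coprime to $b$'s constraint arise. The latter holds because the residues are reduced and $(b,W_0)=1$.
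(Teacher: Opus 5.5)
Your proposal is correct and follows the paper's route: the Selberg square majorant $\bigl(\sum_{d\mid m(n)}\lambda_d\bigr)^2$ weighted by $\Lambda(n)$ on $n\equiv b \pmod{W_0}$, expanded by CRT into reduced classes modulo $W_0[d_1,d_2]$ to give the main term $\frac{X}{\vp(W_0)}\cJ(\lambda)$, followed by Proposition~\ref{prop:local-to-selberg} and division by $\log X$. Your explicit remainder analysis (fixing the optimal $\lambda^*$ with $|\lambda^*_d|\le 1$, divisor-function multiplicities, and Cauchy--Schwarz against the level-of-distribution sum) fills in a step the paper's sketch leaves implicit, and this step is genuinely needed, since minimizing $\cJ(\lambda)$ over arbitrary $\lambda$, as the paper phrases it, does not by itself control the error term.
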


\begin{proof}[Proof sketch]
Let $\lambda = \{\lambda_d\}$ be any admissible Selberg coefficients with $\lambda_1 = 1$. Define the Selberg square weight
\[
W_\lambda(n) := \left(\sum_{d \leq R} \lambda_d \1_{n \bmod d \in \Omega(d)}\right)^2.
\]
For $n \in \cS_\Omega(X; W_0, b, z)$, only the $d = 1$ term survives in the inner sum (since $n$ avoids all obstructions), so $W_\lambda(n) = 1$. Thus
\[
\1_{\{n \in \cS_\Omega\}} \leq \1_{\{n \equiv b \pmod{W_0}\}} \cdot W_\lambda(n).
\]
Summing $\Lam(n)$ (the von Mangoldt function) over $n \in [X, 2X]$:
\[
\sum_{\substack{n \in [X, 2X] \\ n \in \cS_\Omega}} \Lam(n) \leq \sum_{\substack{n \in [X, 2X] \\ n \equiv b \pmod{W_0}}} \Lam(n) W_\lambda(n).
\]
Expanding the square and using the level-of-distribution hypothesis to evaluate sums of $\Lam(n)$ over arithmetic progressions modulo $q = W_0 [d_1, d_2] \leq W_0 R^2$, the main term becomes
\[
\frac{X}{\vp(W_0)} \sum_{d_1, d_2} \lambda_{d_1} \lambda_{d_2} g([d_1, d_2]) = \frac{X}{\vp(W_0)} \cJ(\lambda).
\]
Minimizing over $\lambda$ gives $\WeakSel(\Omega; z, R)$. Combining with Proposition~\ref{prop:local-to-selberg} and dividing by $\log X$ (since $\Lam(p) \geq \log X$ for primes $p \in [X, 2X]$) yields the result.
\end{proof}

\begin{corollary}[Independent local filters multiply densities]\label{cor:comp}
Let $\Omega_1, \Omega_2$ be patterns supported on disjoint prime sets, $\Omega = \Omega_1 \cup \Omega_2$. Then
\[
\pi_\Omega(X; W_0, b, z) \ll \WeakLoc(\Omega_1; z) \WeakLoc(\Omega_2; z) \cdot \frac{X}{\vp(W_0) \log X}.
\]
\end{corollary}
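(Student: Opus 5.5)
The corollary follows by chaining Theorem~\ref{thm:prime-sieve-loc} with Proposition~\ref{prop:mult}. Implicitly we work in the setting of Theorem~\ref{thm:prime-sieve-loc}. We assume a level of distribution $\theta$; Bombieri--Vinogradov supplies $\theta = 1/2$ unconditionally. We also need a uniform bound $\omega(p) \leq \kappa$ for the combined pattern $\Omega$, which holds since $\omega(p) = \max(\omega_1(p),\omega_2(p))$ on disjoint supports. Finally, we fix $s > 1$ and $R = z^s$ with $W_0 R^2 \leq X^\theta/(\log X)^B$, so the implied constant depends only on $\kappa$ and $s$.

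First, apply Theorem~\ref{thm:prime-sieve-loc} to the combined pattern $\Omega$. This gives
\[
\pi_\Omega(X; W_0, b, z) \leq (F_\kappa(s) + o(1))\, \WeakLoc(\Omega; z)\, \frac{X}{\vp(W_0)\log X} + O_A\!\left(\frac{X}{(\log X)^{A+1}}\right).
\]
Second, replace $\WeakLoc(\Omega;z)$ by the product $\WeakLoc(\Omega_1;z)\,\WeakLoc(\Omega_2;z)$ using Proposition~\ref{prop:mult}. The factor $F_\kappa(s)+o(1)$ is bounded for fixed $s,\kappa$ and is absorbed into the $\ll$.

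The only genuine point, and the step I expect to be the obstacle, is absorbing the error term $O_A(X/(\log X)^{A+1})$ into the main term. This requires a lower bound on the main term. I would use the crude bound $\WeakLoc(\Omega;z) \geq \prod_{p\leq z}(1-\kappa/(p-1))$, restricted to primes $p > \kappa+1$ where every factor is positive. Mertens' theorem makes this product $\gg_\kappa (\log z)^{-\kappa} \geq (\log X)^{-\kappa}$. For small primes $p \le \kappa+1$, a factor could vanish when $\omega(p) = p-1$; but then the sifted set is empty or contained in a single residue class modulo $p$, so the bound is trivial or can be moved into $W_0$. We also need $\vp(W_0) \leq W_0 \leq X^\theta$. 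Choosing $A$ large enough that $(\log X)^{-A} \leq (\log X)^{-\kappa}/X^{\theta}$ is not possible as stated, since $\vp(W_0)$ may be a power of $X$. So instead I would either restrict to $W_0 \leq (\log X)^{O(1)}$, the typical wheel size, or keep the error term explicitly, writing the corollary as $\ll$ main term plus $O_A(X(\log X)^{-A-1})$. With $W_0$ polylogarithmic, taking $A = \kappa + O(1)$ suffices, and the corollary follows.
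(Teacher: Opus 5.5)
Your argument is the paper's: apply Theorem~\ref{thm:prime-sieve-loc} to the combined pattern $\Omega$, factor $\WeakLoc(\Omega;z)=\WeakLoc(\Omega_1;z)\,\WeakLoc(\Omega_2;z)$ by Proposition~\ref{prop:mult}, and absorb $F_\kappa(s)+o(1)$ into the implied constant. The paper's sketch simply drops the $O_A\bigl(X(\log X)^{-A-1}\bigr)$ term, so your extra step and your caveat go beyond it and are correct: the extra step is the Mertens lower bound $\WeakLoc(\Omega;z)\gg_\kappa(\log X)^{-\kappa}$, with the degenerate case $\omega(p)=p-1$ forcing $\pi_\Omega=0$; the caveat is that absorption only works when $\vp(W_0)$ is at most polylogarithmic in $X$, and otherwise the error term must be kept — a restriction the paper leaves unstated.
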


\begin{proof}[Proof sketch]
By Proposition~\ref{prop:mult}, $\WeakLoc(\Omega; z) = \WeakLoc(\Omega_1; z) \cdot \WeakLoc(\Omega_2; z)$ when the supporting prime sets are disjoint. Apply Theorem~\ref{thm:prime-sieve-loc} to the combined pattern $\Omega$, absorbing the sieve constant $F_\kappa(s)$ into the implied constant.
\end{proof}

% ============================================================
\section{Discussion}
\label{sec:discussion}
% ============================================================

\subsection{Compatibility with known results}

All PSI conjectures and the proven weakness results are compatible with what is currently known about the distribution of primes and with standard conjectures on prime gaps.

The Prime Number Theorem controls the average density of primes but does not provide an algorithm for computing the successor of a given prime without testing candidates for primality. Conjectures such as Cram\'er's conjecture imply upper bounds of the form $g_n \leq (\log p_n)^{2+o(1)}$; PSI-T then asserts that no algorithm can compute the prime successor asymptotically faster than the sequential method by more than a polylogarithmic factor.

Conjectures such as the twin prime conjecture concern infinitely many indices $n$ for which $g_n$ is small. This does not contradict any PSI variant, as all of them allow for some inputs where the next prime is unusually close.

\subsection{Plausibility from the Cram\'er model}

In the Cram\'er model, primes are modeled by independent random variables where each integer $n$ is declared ``prime'' with probability $1/\log n$. Under this heuristic, gaps near height $X$ are approximately geometric with mean $\Theta(\log X)$, and a typical gap's binary encoding has length $\Theta(\log \log X)$. One expects $\K(\code(G) \mid X) \approx \log \log X$ for most samples $G$, which is exactly PSI-K.

\subsection{What the results prove vs.\ conjecture}

The proven results (PSI-K for $c < 1$, PSI-W, PSI-W-LE) show strong anti-concentration and high entropy properties. They are consistent with strong conjectural models but do not approach them in strength:
\begin{itemize}[leftmargin=2em]
\item We do \emph{not} prove Poisson statistics for normalized gaps.
\item We do \emph{not} prove independence of consecutive gaps.
\item We only use sieve \emph{upper bounds}, so our statements are inherently ``one-sided.''
\end{itemize}
Nevertheless, PSI-W and PSI-W-LE provide clean, compositional, and provable proxies for ``randomness of prime spacings'' at the level of diversity and entropy.

\subsection{The $c = 1$ barrier}

Our unconditional proof of PSI-K$(c, \delta)$ works for all fixed $c < 1$ but does not extend to $c = 1$. This is because the union bound over low-complexity gap values becomes ineffective when $|S_c(X)| \approx \log X$. Proving an additive incompressibility statement of the form $\K(\code(g(p)) \mid X) \geq \logtwo \logtwo X - O(1)$ for most gaps would require non-concentration statements stronger than bounding each fixed gap separately.

% ============================================================
\section{Conclusion}
% ============================================================

We have developed multiple complementary formalizations of Prime Successor Irreducibility:

\begin{enumerate}[leftmargin=2em]
\item \textbf{PSI-T} asserts running-time lower bounds for prime successor algorithms.
\item \textbf{PSI-K} asserts incompressibility of typical prime gaps given the scale; we prove PSI-K$(c, \delta)$ unconditionally for all $c < 1$.
\item \textbf{PSI-W} shows that no small menu of gap values can capture significant probability mass.
\item \textbf{PSI-W-LE} shows that collision probability decays, hence logical entropy tends to $1$.
\item The \textbf{local-to-Selberg weakness framework} provides compositional sieve bounds.
\end{enumerate}

These results extend to prime constellations and consecutive gap vectors. The combination of operational and informational formulations offers a promising template for connecting computational irreducibility with classical questions in number theory.

The unifying thread through the weakness formulations is the quantale-weakness framework introduced in Section~\ref{subsec:quantale-weakness}. By recognizing collision probability, menu bounds, and sieve-theoretic weakness as instances of a single algebraic structure, we gain both conceptual clarity and compositional tools for combining and extending these results.

Overall, while the results of this paper are technically involved and framed in statistical and sieve-theoretic terms, they conceptually support the original motivating conjecture articulated by Lauritzen, which was the inspiration for the lines of investigation taken: that despite large-scale regularities in the distribution of primes, the task of identifying the immediate successor of a given prime remains locally irreducible. That is, even when global structure is known, the information needed to determine the next prime is typically not compressible beyond trivial scale information. PSI-T expresses the strongest algorithmic form of this intuition, while PSI-K and PSI-W provide rigorous, unconditional results that capture provably accessible consequences of it.

% ============================================================
\appendix

\section{Bertrand's Postulate and Range Bounds}\label{app:bertrand}

\begin{lemma}[Iterated Bertrand range bound]\label{lem:bertrand}
For every integer $n \geq 2$, there exists a prime in $(n, 2n)$. In particular, if $p$ is prime then $p^+ \leq 2p$. More generally, for fixed $t \geq 1$ and any prime $p$:
\[
p^{+(t)} \leq 2^t p.
\]
\end{lemma}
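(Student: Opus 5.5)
The plan has three steps: take Bertrand's postulate itself as the classical input, derive $p^+\le 2p$ from it, and then iterate by induction on $t$.

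\textbf{Step 1 (Bertrand's postulate).} The first assertion, that $(n,2n)$ contains a prime for every integer $n\geq 2$, is Chebyshev's classical theorem. I will cite it rather than reprove it. If a self-contained argument is wanted, I would follow Erd\H{o}s's proof. Assume there is no prime in $(n,2n]$ and study the prime factorization of $\binom{2n}{n}$:
\begin{itemize}
\item the lower bound $\binom{2n}{n}\geq 4^n/(2n+1)$;
\item no prime in $(2n/3, n]$ divides $\binom{2n}{n}$, since $p$ and $2p$ each contribute exactly one factor to $(2n)!$ while $p$ contributes two factors to $(n!)^2$;
\item each prime power dividing $\binom{2n}{n}$ is at most $2n$, so primes above $\sqrt{2n}$ occur to the first power only;
\item the primorial bound $\prod_{p\leq x} p\leq 4^{x}$.
\end{itemize}
Together these give $4^n/(2n+1)\leq (2n)^{\sqrt{2n}}\,4^{2n/3}$, which fails for $n$ beyond a modest threshold. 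The remaining small $n$ are covered by the chain of primes $2,3,5,7,13,23,43,83,163,317,631,\dots$. Here I need a prime strictly below $2n$, not merely in $(n,2n]$. This is automatic for $n\geq 2$, because $2n$ is even and greater than $2$, so it is not prime.

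\textbf{Step 2 (single successor).} For a prime $p\geq 2$, apply Step 1 with $n=p$ to get a prime $q$ with $p<q<2p$. Since $p^+$ is the least prime exceeding $p$, we have $p^+\leq q<2p$, and in particular $p^+\leq 2p$.

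\textbf{Step 3 (iteration).} Induct on $t$. The base case $t=1$ is Step 2. For the inductive step, $p^{+(t)}$ is itself a prime, and $p^{+(t+1)}=(p^{+(t)})^+$. Applying Step 2 to the prime $p^{+(t)}$ and then the inductive hypothesis gives
\[
p^{+(t+1)}\leq 2\,p^{+(t)}\leq 2\cdot 2^t p=2^{t+1}p.
\]

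The only substantive difficulty is Bertrand's postulate itself, and I plan to treat it as a known result. The inductive bookkeeping only requires that each iterate is again prime, which holds by definition of the successor.
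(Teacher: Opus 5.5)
Your proposal is correct and takes the same route as the paper, which states this lemma without proof and treats Bertrand's postulate as classical. Your Step 2, which gives $p^+ \le 2p$, and your induction in Step 3, which gives $p^{+(t)} \le 2^t p$, are exactly the implicit argument, and your optional Erd\H{o}s sketch (including the remark that $2n$ is not prime, so $(n,2n]$ can be replaced by $(n,2n)$) is also sound.
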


\section{A Crude Bound on the Prime-Pair Singular Series}\label{app:ss}

\begin{lemma}[Crude bound for the pair singular series]\label{lem:ss-crude}
Let $F(h) := \prod_{p \mid h, p > 2} \frac{p-1}{p-2}$. Then for $h \geq 3$, $F(h) \ll \log \log(3h)$.
\end{lemma}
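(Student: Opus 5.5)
The plan is to reduce $F(h)$ to the ratio $h/\vp(h)$ and then bound that ratio by the classical Mertens-type argument. Both reductions are elementary; the only real work is in the last step.

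\textbf{Comparison with $h/\vp(h)$.} For each odd prime $p$,
\[
\frac{p-1}{p-2} = \frac{p}{p-1}\cdot\frac{(p-1)^2}{p(p-2)} = \frac{p}{p-1}\Bigl(1+\frac{1}{p(p-2)}\Bigr).
\]
Taking the product over odd $p \mid h$ gives
\[
F(h) \leq \prod_{p\mid h}\frac{p}{p-1}\cdot \prod_{p>2}\Bigl(1+\frac{1}{p(p-2)}\Bigr) = C_*\,\frac{h}{\vp(h)}.
\]
Here $C_*<\infty$ because $\sum_p 1/(p(p-2))$ converges. This is the same reduction used in the proof of Lemma~\ref{lem:F-bound}.

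\textbf{Bounding $h/\vp(h)$.} Let $k=\omega(h)$ be the number of distinct prime factors of $h$. The map $p\mapsto p/(p-1)$ is decreasing, so $h/\vp(h)$ is at most the same product taken over the first $k$ primes $q_1<\dots<q_k$. Since $\prod_{p\mid h}p \le h$, and the product of the first $k$ primes is at most that, we get
\[
\sum_{i\le k}\log q_i = \theta(q_k) \le \log h .
\]
Chebyshev's bound $\theta(x)\gg x$ then gives $q_k \ll \log h$, say $q_k\le A\log(3h)$.

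\textbf{Conclusion via Mertens.} Mertens' theorem gives $\prod_{p\le y} p/(p-1)\ll \log y$ for $y\ge 2$. Applying this with $y=\max(2,A\log(3h))$ yields
\[
\frac{h}{\vp(h)} \ll \log\log(3h),
\]
where the $\max$ absorbs small $h$, and for $h \ge 3$ we have $\log\log(3h)$ bounded below by a positive constant. Combining with the first step gives $F(h)\ll\log\log(3h)$.

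The main point needing care is the step from $h/\vp(h)$ to the first $k$ primes together with the bound $q_k\ll\log h$. This needs the monotonicity observation and Chebyshev's estimate for $\theta$. The rest is standard Mertens input.
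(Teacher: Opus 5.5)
Your proof is correct, but it takes a different route from the paper's proof of this appendix lemma. Your route is instead the one in the paper's proof of Lemma~\ref{lem:F-bound}, and you actually prove the ``classical bound'' on $h/\varphi(h)$ that that proof only cites.

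The paper works additively. It writes $\log F(h)=\sum_{p\mid h,\,p>2}\log\bigl(1+\frac{1}{p-2}\bigr)$, bounds each term by $2/p$, replaces the prime divisors of $h$ by the primes up to some $y\ll\log h$, and applies Mertens' second theorem $\sum_{p\le y}1/p=\log\log y+O(1)$. You work multiplicatively: you peel off $p/(p-1)$ with a convergent correction factor and use Mertens' product theorem.

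The difference matters. Because the last step exponentiates, the coefficient in front of $\log\log y$ must be exactly $1$. The paper's termwise bound $2/p$ gives only $\log F(h)\le 2\log\log y+O(1)$, which yields $F(h)\ll(\log\log(3h))^2$. So its closing ``hence $F(h)\ll\log\log(3h)$'' does not follow as written. Also $\log 2>2/3$, so that termwise bound even fails at $p=3$.

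Your factorization $\frac{p-1}{p-2}=\frac{p}{p-1}\bigl(1+\frac{1}{p(p-2)}\bigr)$ isolates exactly the $1/p+O(1/p^2)$ behaviour that is needed. Equivalently, the paper's additive route is repaired by using $\log\bigl(1+\frac{1}{p-2}\bigr)\le\frac{1}{p}+\frac{2}{p(p-2)}$ instead of $2/p$.

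Your explicit rearrangement onto the first $\omega(h)$ primes also makes precise the paper's vague ``largest prime factor scale'' step. That step uses the monotonicity of $p/(p-1)$, $\theta(q_k)\le\log h$, and Chebyshev's bound. Without it, $\prod_{p\le y}p\le h$ would be false if $y$ literally meant the largest prime factor of $h$.
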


\begin{proof}
Write
\[
\log F(h) = \sum_{\substack{p \mid h \\ p > 2}} \log\left(1 + \frac{1}{p-2}\right) \leq \sum_{\substack{p \mid h \\ p > 2}} \frac{2}{p} \leq 2 \sum_{p \leq y} \frac{1}{p},
\]
where $y$ is the largest prime factor scale. Since $\prod_{p \leq y} p \leq h$, one has $y \ll \log h$. Using $\sum_{p \leq y} 1/p = \log \log y + O(1)$ gives $\log F(h) \ll \log \log \log(3h)$, hence $F(h) \ll \log \log(3h)$.
\end{proof}

\section{Bit-Complexity of Primality Testing}\label{app:primality}

We provide concrete estimates for the primality cost function $C(k)$ referenced in Section~\ref{sec:PSI-T}.

\subsection{Miller-Rabin with deterministic cleanup}

A standard approach combines probabilistic Miller-Rabin rounds with a deterministic backup:

\begin{enumerate}[leftmargin=2em]
\item Run $r$ rounds of Miller-Rabin with fixed, deterministic bases. Each round costs $O(k^3)$ bit operations (using schoolbook multiplication) or $O(k^2 \log k \log \log k)$ with fast multiplication.
\item If all rounds declare ``probable prime,'' run a deterministic test such as AKS.
\end{enumerate}

For AKS, the worst-case cost is $D(k) = O(k^{c_0})$ for some explicit constant $c_0$ (early versions had $c_0 \approx 12$; refinements achieve $c_0 \approx 6$). The combined algorithm has
\[
C(k) = O(k^{\max(3, c_0)}) = O(k^c)
\]
for some constant $c \geq 3$.

\subsection{Implications for PSI-T}

With this estimate, the sequential algorithm's cost on input $p_n$ satisfies
\[
T_{\mathrm{Seq}}(p_n) \asymp g_n \cdot C(\ell(p_n)) = O(g_n \cdot \ell(p_n)^c).
\]

The PSI-T conjectures (Section~\ref{sec:PSI-T}) assert that for any prime successor algorithm $M$:
\[
\mathrm{time}_M(p_n) \geq \frac{c'}{(\log p_n)^\alpha} \cdot g_n \cdot C(\ell(p_n))
\]
for some constants $c' > 0$ and $\alpha > 0$, holding on the specified set of primes (infinitely many, positive density, density one, or all but finitely many, depending on the variant).

The key point is that the lower bound scales with $g_n$---the number of integers in the gap---not merely with the bit-length $\ell(p_n)$. This captures the intuition that one cannot avoid examining most candidates in the gap.

% ============================================================

\end{document}